\newcommand{\dd}{{\rm d}}
\newcommand{\p}{\partial}
\newcommand{\ret}{\mathrm{ret}} 
\newcommand{\be}{\begin{equation}}
\newcommand{\en}{\end{equation}} 
\newcommand{\riso}{\bar{{\rm r } } }
\newcommand{\aiso}{\bar{{\rm a } } }
\newcommand{\biso}{\bar{{\rm b}} }
\newcommand{\paren}[1]{\left({#1}\right)}
\newcommand{\rtil}{\tilde{r}}
\newcommand{\atil}{\tilde{\alpha}}
 \newtheorem{theorem}{Theorem}
\begin{document}
\title{On the radiation field of a linearly accelerated charged particle in Born-Infeld theory}

\author{Miguel~L.~Pe\~{n}afiel}
\email{mpenafiel@cbpf.br}
\email{miguelpenafiel@upb.edu}
\affiliation{UERJ - Universidade do Estado do Rio de Janeiro 150, CEP 20550-013, Rio de Janeiro, RJ, Brazil.}
\affiliation{FIA - Facultad de Ingeniería y Arquitectura, Universidad Privada Boliviana, Camino Achocalla Km 3.5, La Paz, Bolivia.}

\author{Santiago Esteban Perez Bergliaffa}
\email{santiagobergliaffa@uerj.br}
\affiliation{UERJ - Universidade do Estado do Rio de Janeiro 150, CEP 20550-013, Rio de Janeiro, RJ, Brazil.}

\date{\today}

\begin{abstract}
The electric potential and the electromagnetic field for a linearly accelerated Born-Infeld charged particle 
are obtained 
in an inertial frame
by a method that can, in principle, be  applied to any electromagnetic theory. The method is based on \textit{(i)}
the fact that the metric near the horizon of a Schwarzschild black hole is equivalent to that of Rindler spacetime, and \textit{(ii)} a theorem that guarantees that the electrostatic potential for a given nonlinear theory in a static, spherically symmetric spacetime is entirely specified by 
the Maxwellian electrostatic potential in the same background. Using analytical and numerical methods, the features of the radiation field and the radiation-reaction for such an accelerated particle are discussed in detail.
\end{abstract}

\maketitle

\section{Introduction} \label{sec:intro}

The study of the radiation emitted by an accelerated charged particle is usually performed in Maxwell Electrodynamics (ME) through the electromagnetic potential for the particle, known as the Li\'{e}nard-Wiechert (LW) potential, the derivation of which relies on the well-known Green's function method. While
the properties of the radiation field are obtained in a rather straightforward way, the description of the action of the field on the particle within ME has been theoretically challenging.
Divergences arise, along with runaway solutions \cite{Dirac1938}. Although most of these theoretical inconsistencies were overcome with the advent of Quantum Electrodynamics (QED),  the study of such classical problems remains of interest  today \cite{Gralla2009,Kiessling2019,Ruhlandt2020}.

The situation changes drastically when
the same problem is considered in nonlinear electrodynamics (NLED). In particular, the Green's function method is no longer applicable \cite{Perlick2015}. The radiation field has not been studied, to our knowledge, in NLED with the exception of ModMax theory \cite{Lechner2022}.
As we shall see, the well-known equivalence between the geometry
near the horizon of a Schwarzschild black hole and that of Rindler's spacetime paves the way to the study of such problems\footnote{
The electromagnetic field of a uniformly accelerated charge in its comoving
Rindler frame was studied in \cite{Eriksen2004}, and the class of observers that detect an electrostatic field for a linearly accelerating charge have been found in  \cite{Leonov2012}.}.

Let us first point out that a closed analytic form for the potential of a static electric charge in the vicinity of a Schwarzschild black hole was obtained by Copson \cite{Copson1928}, and later modified to satisfy the adequate asymptotic boundary conditions by Linet \cite{Linet1976}\footnote{Subsequently, 
{other interesting phenomena in Maxwell's electrostatics have been addressed in other types of spacetimes} \cite{Leaute1976,Linet2005,Watanabe2013,Boisseau2013,RubindeCelis2020}.}. The analogous problem in the context of Born-Infeld (BI) electrodynamics has been recently studied in \cite{Falciano2019}, where an approximate expression for the electrostatic potential in good agreement with the numerical results was found\footnote{The BI particle in an Einstein-Born-Infeld spacetime was analyzed in \cite{Falciano2021}.}.

It is important to remark that, among the theories of NLED \cite{Plebanski1970,Boillat1970,OliveiraCosta2009}, Born-Infeld (BI) electrodynamics \cite{Born425} stands as one of the most interesting examples (see for instance \cite{Fradkin1985,Carley2006,Ferraro2007,Sanchez2018}). BI electrodynamics was specifically designed to eliminate the divergences that arise in the classical theory of point electrons.
The theory has a maximum field parameter, $\beta$ which, as we shall see, defines the size of the region where nonlinearities dominate. It is important to point out that
the nonlinearities of BI electrodynamics are often difficult to handle. In fact, only a handful of analytic solutions are known in the literature \cite{Ferraro2007,Ferraro2013,Manojlovic2020,Chruscinski1998a,Chruscinski1998}. 

Here we shall use the results presented in \cite{Falciano2019}, along with the abovementioned equivalence, to obtain the electromagnetic field of a linearly accelerated charged particle in BI theory. In brief, the field  of a charged BI particle at rest in an accelerated coordinate system will be obtained from the electric potential of a charged BI particle near the horizon of a Schwarzschild black hole, and then transformed to an inertial frame, a procedure analogous to that used in \cite{Gupta1998} to obtain the electromagnetic field in ME\footnote{It is also worth noting that fields in BI electrodynamics seen by accelerating observers can have interesting physical consequences \cite{Guzman‐Herrera2022}.}.
With the expression for the field at hand, we shall discuss the radiation field of a BI particle with constant acceleration in linear motion, and the corresponding radiation-reaction.

The paper is organized as follows. In Section \ref{sec:NLEDpoint} we review  some relevant
results of BI
electrodynamics for point particles, with emphasis on the electrostatic potential of a BI charged particle in the vicinity of a Schwarzschild black hole
obtained in \cite{Falciano2019}. In Section \ref{sec:Rindler1}, such results are adapted to a BI particle arbitrarily close to the horizon of a Schwarzschild black hole and the limit where the physical situation is that of a linearly accelerating particle in Rindler spacetime is discussed. Additionally, we derive the relevant components for the electric field in such a system. The explicit expressions for the fields transformed to an inertial coordinate system 
are derived in 
Section \ref{sec:inertcomp}. The phenomena of radiation and radiation-reaction for the linearly accelerated BI particle are investigated in 
Section \ref{sec:Radiation}, using the expressions for the fields previously derived. Finally, in Section \ref{sec:Conclusions}, we conclude by discussing the relevant results and suggesting further avenues for research. 

In the present work we use geometrized units such that $c=G=1$ and adopt Gaussian units for the electromagnetic field. Multiple coordinate systems, both inertial and non-inertial, 
are used in the text. 
They are summarized in Table \ref{table}.
\begin{table} \label{table}
    \centering
    \begin{tabular}{c|c|c}
        \textbf{Coordinate system} & \textbf{Coordinates} & \\ \hline
         flat coordinates& $(t',z',\theta,\varphi)$ & inertial\\
        $\lambda$-Rindler coordinates & $(\tau,\lambda,\rho,\varphi)$ & non-inertial \\
         $Z$-Rindler coordinates& $(\tau,Z,\rho,\varphi)$& non-inertial\\
    \end{tabular}
    \caption{Relevant coordinate systems used throughout the paper.}
\end{table}

\section{Nonlinear electrodynamics for point particles} \label{sec:NLEDpoint}

Theories of nonlinear electrodynamics (NLED) are
generalizations of Maxwell's theory that aim at solving
theoretical inconsistencies of the latter, 
as well as describing fundamental interactions at the effective level \cite{Plebanski1970,Boillat1970,OliveiraCosta2009,Ejlli2020,Sorokin2021}. Maxwell's equations are linear in the fields and hence follow from a variational principle applied to quadratic combinations of the fields, \textit{i.e.}, $\mathcal{L}_\mathrm{M}=-F/8\pi$, where $F=\frac{1}{2}F_{\mu\nu}F^{\mu\nu}$ is the invariant constructed from the Faraday tensor $F_{\mu\nu}=\p_\mu A_\nu-\p_\nu A_\mu$. From a theoretical standpoint, Lagrangians involving higher-order derivatives of the fields \cite{Podolsky1942} or even nonlinear functions on the two linearly independent invariants $F$ and $G$ can be considered, where $G=\frac{1}{2}\widetilde{F}^{\mu\nu}F_{\mu\nu}$, 
$\widetilde{F}^{\mu\nu}=\frac{1}{2}\eta^{\mu\nu\alpha\beta}F_{\alpha\beta}$, and $\eta^{\mu\nu\alpha\beta}$ is the totally antisymmetric Levi-Civita tensor. The field equations that follow from an arbitrary Lagrangian will depend on its specific functional form, which can be constrained by imposing physical requirements, such as duality and Maxwell's weak-field limit \cite{Plebanski1970,SalazarI.1987,SalazarIbarguen1989,GarciaD.1984,OliveiraCosta2009}.

Adopting the usual coupling between fields and spinless particles, the action is given by \be\label{eq:action}
S=-\int\dd s\paren{m\sqrt{g_{\mu\nu}\dot{x}^\mu\dot{x}^{\nu}}+eA_{\mu}\dot{x}^\mu}+\frac{1}{4\pi}\int{\dd^4x\mathcal{L}\paren{F,G}}
\en
where $\dot{x}^\mu$ stands for the four-velocity of the particle, $A_\mu$ is the four-potential associated with the electromagnetic field, $\mathcal{L}\paren{F,G}$ is an arbitrary Lagrangian density, $m$ stands for the bare mass of the particle, and $e$ denotes its electric charge. The equations of motion that follow from Eq. \eqref{eq:action} are given by
\begin{eqnarray}
f^\mu=m a^\mu&=&e F^{\mu}_{\ \nu}\frac{\dd x^\nu}{\dd s} \ ,\label{eq:Lorentz}\\ 
\p_\mu E^{\mu\nu}&=&-4\pi j^{\nu}\ , \label{eq:exeqn}\\
\p_\mu\widetilde{F}^{\mu\nu}&=&0\ , \label{eq:Bianchi}\\
j^{\mu}\paren{x}&=&e\int_{-\infty}^{\infty} \dd s v^{\mu}(s)\delta^{4}\left[x-z(s)\right]\ .\label{eq:current}
\end{eqnarray}
Here, Eq. \eqref{eq:Lorentz} is the usual expression for the Lorentz force, and the electromagnetic field equation in Eq. \eqref{eq:exeqn} is expressed in terms of the excitation tensor { $E_{\mu\nu}$,} defined as
{\be \label{eq:defexcitation}
E_{\mu\nu}=\frac{\p\mathcal{L}}{\p F^{\mu\nu}}=2\paren{\mathcal{L}_FF_{\mu\nu}+\mathcal{L}_G\widetilde{F}_{\mu\nu}}\ ,
\en
where $\mathcal{L}_X\equiv\p\mathcal{L}/\p X$}. Note that, by construction, $E^{\mu\nu}$ is antisymmetric, and Eq. \eqref{eq:exeqn} closely resembles the inhomogeneous Maxwell's equations. The key difference is that $E^{\mu\nu}$ is, in general, a nonlinear function of $F^{\mu\nu}$, which can be interpreted as a constituitive relation for the vacuum, leading to effects such as birrefringence \cite{Plebanski1970,Boillat1970,Boillat1966}. Additionally, Eq. \eqref{eq:Bianchi} is the Bianchi identity and Eq. \eqref{eq:current} gives the definition of the four-current of a point particle moving along and arbitrary worldline $z^{\mu}\paren{s}$. 

In this work, we consider a well-known example of NLED proposed by Born and Infeld (BI) \cite{Born425} and 
defined by the Lagrangian density
\be \label{eq:LBI}
\mathcal{L}_{\text{BI}}=\frac{1}{4\pi}\beta^2\paren{1-\sqrt{U}}\ , 
\en
where $U\equiv1+F/\beta^2-G^2/4\beta^4$ and $\beta$ is the maximum field parameter. 

The field equations can be written in the usual vector notation as
\begin{subequations}\label{eq:MaxBI}
\begin{eqnarray} 
\boldsymbol{\nabla}\cdot\mathbf{D}=4\pi\rho\ , \boldsymbol{\nabla}\times\mathbf{H}=\mathbf{J}+\frac{\p \mathbf{D}}{\p t}\ , \\
\boldsymbol{\nabla}\times\mathbf{E}=-\frac{\p\mathbf{B}}{\p t}\ , \ \boldsymbol{\nabla}\cdot\mathbf{B}=0\ ,
\end{eqnarray}
\end{subequations}
where the
spacelike vector fields 
$D^{\mu}\equiv -E^{\mu\nu}v_\nu$ and $H^\mu\equiv-\widetilde{E}^{\mu\nu}v_{\nu}$ were introduced, with $v^\mu$ being the normalized 4-velocity of the observer's worldline. Both fields are nonlinearly dependent of the electric and magnetic induction fields, constructed as the irreducible parts of the Faraday tensor $F_{\mu\nu}$. 

Additionally, BI electrodynamics possesses the so-called $F-P$ duality \cite{Bronnikov2001} 
which ensures that there exists a one-to-one correspondence between two different possible representations of the Lagrangian density via a Legendre transform. For a generic NLED, it is possible to define a Hamiltonian-like density, $\mathcal{H}=\mathcal{H}(P,S)$ from which the field equations can be derived. Both Hamiltonian-like and Lagragian pictures are connected via a Legendre transform:
\be
\mathcal{H}=E^{\mu\nu}F_{\mu\nu}-\mathcal{L}\ .
\en
In order to complete the Legendre transformation, it must be possible to invert the excitation tensor such that $E^{\mu\nu}\paren{F^{\mu\nu}}\to F^{\mu\nu}\paren{E^{\mu\nu}}$, and the Hamilton equations of motion are completed with
\be\label{eq:Hframe}
F^{\mu\nu}=2\frac{\p\mathcal{H}}{\p E_{\mu\nu}}=2\paren{\mathcal{H}_PE^{\mu\nu}+\mathcal{H}_S\widetilde{E}^{\mu\nu}}\ ,
\en
where $\mathcal{H}_X\equiv\p\mathcal{H}/\p X$. Therefore the quantity $\mathcal{H}$ can be expressed as a function of only the two invariants constructed from $E^{\mu\nu}$ and its dual, \textit{i.e.,} $\mathcal{H}=\mathcal{H}\paren{P,S}$ with
\begin{align}
P=\frac{1}{2}E^{\mu\nu}E_{\mu\nu}\ , \\
S=\frac{1}{2}\widetilde{E}^{\mu\nu}E_{\mu\nu}\ .
\end{align}
 For the case of BI electrodynamics, the following relations are valid:
\begin{subequations}
\begin{eqnarray}
E^{\mu\nu}&=&-\frac{1}{\sqrt{U}}\paren{F^{\mu\nu}-\frac{G}{2\beta^2}\widetilde{F}^{\mu\nu}}\ , \\
F^{\mu\nu}&=&-\frac{1}{\sqrt{V}}\paren{E^{\mu\nu}+\frac{S}{2\beta^2}\widetilde{E}^{\mu\nu}}\ \label{eq:FrelE},
\end{eqnarray}
\end{subequations}
where $V\equiv1-P/\beta^2-S^2/4\beta^4$
.
Using Eq. \eqref{eq:FrelE} and its dual, we find that the electric and magnetic fields are given by
\begin{eqnarray} \label{eq:EBvectors}
\mathbf{E}\equiv\frac{1}{\sqrt{V}}\paren{\mathbf{D}+\frac{\paren{\mathbf{D}\cdot\mathbf{H}}}{\beta^2}\mathbf{H}}\ , \\
\mathbf{B}\equiv\frac{1}{\sqrt{V}}\paren{\mathbf{H}-\frac{\paren{\mathbf{D}\cdot\mathbf{H}}}{\beta^2}\mathbf{D}}\ . \nonumber
\end{eqnarray}
In the electrostatic case, 
\be \label{eq:normEBI}
\mathbf{E}=\frac{\mathbf{D}}{\sqrt{1+\vert\mathbf{D}\vert^2/\beta^2}}\ .
\en

The field equations in Eq. \eqref{eq:MaxBI} indicate that the sources generate the fields $\mathbf{D}$ and $\mathbf{H}$, and these fields yield regular solutions for the fields $\mathbf{E}$ and $\mathbf{B}$ via the constitutive relations given by Eq. \eqref{eq:EBvectors}. This can be straightforwardly seen in the case of a point particle with charge $e$, where $\mathbf{D}=e/r^2\hat{\mathbf{r}}$. Defining the length parameter $\lambda_{\text{BI}}\equiv\sqrt{e/\beta}$, by Eq. \eqref{eq:normEBI}, the electric field is
\be \label{eq:EfieldBI}
\mathbf{E}=\frac{\beta}{\sqrt{1+r^4/\lambda_{\text{BI}}^4}}\hat{\mathbf{r}}\ ,
\en
which, when evaluated at the origin, gives the finite value $\vert\mathbf{E}\vert=\beta$. The electrostatic potential for a point particle in flat spacetime follows from the integration of the electric field along a 
trajectory, 
and is finite at the position of the particle, namely
\be\label{eq:potBIfst}
\phi_{\text{BI}}=-\int_{\infty}^{0}\mathbf{E}\cdot\dd\mathbf{r}=\sqrt{e\beta}\frac{\Gamma\paren{\frac{1}{4}}^2}{4\sqrt{\pi}}\ .
\en
Finally, the energy-momentum tensor for Born-Infeld electrodynamics is obtained through the definition $T_{\mu\nu}=\frac{2}{\sqrt{-g}}\frac{\delta \sqrt{-g}\mathcal{L}}{\delta g^{\mu\nu}}$, resulting in 
\be \label{eq:Tmn}
T_{\mu\nu}=\frac{1}{4\pi}\left[\frac{1}{\sqrt{U}}\left(F_{\nu}^{\ \lambda}F_{\lambda\mu}+\frac{1}{2\beta^2}G^2g_{\mu\nu}\right)+g_{\mu\nu}\beta^2\left(\sqrt{U}-1\right)\right]\ .
\en

\subsection{Born-Infeld particle in the vicinity of a Schwarzschild black hole}\label{ssec:BIpart}
In this subsection, we summarize the results of \cite{Falciano2019}, where the electrostatic potential of a BI particle in the vicinity of a Schwarzschild black hole was obtained.

 In a curved background, the inhomogeneous field equations
 of NLED
 read
\be \label{eq:fieldNLEDcurved}
\p_\mu\paren{\sqrt{-g}E^{\mu\nu}}=-4\pi\sqrt{-g}j^\nu\ ,
\en
where $E^{\mu\nu}$
is defined in terms of $\mathbf{D}$ and $\mathbf{H}$
.

Let us now introduce 
the isotropic coordinates $\paren{t,\riso,\theta,\varphi}$ for Schwarzschild's spacetime, defined by 
$$r=\Sigma\paren{\riso}\riso,\ \text{with}\ \Sigma\paren{\riso}=\paren{1+\frac{r_s}{4\riso}}^2\ ,$$
where $r$ is the radial coordinate in standard Schwarzschild coordinates and $r_s=2M$ is the Schwarzschild radius. In these coordinates,
for an arbitrary theory of NLED
with Lagrangian density  
$\mathcal{L}=\mathcal{L}\paren{F,G}$
, {in the electrostatic case},  the excitation tensor {defined in Eq. \eqref{eq:defexcitation}} takes the form $E^{\mu\nu}=\p\mathcal{L}/\p F_{\mu\nu}=2\mathcal{L}_F F^{\mu\nu}$, {since 
$G=0$ in electrostatics}. Consequently, taking into account that the homogeneous equation guarantees that $\mathbf{E}=-\boldsymbol{\nabla}\phi$,
we can write 
Eq. \eqref{eq:fieldNLEDcurved} in terms of the electrostatic potential {$\phi$} as
follows:
\be\label{eq:potNLEDcurv}
\Delta\phi+\frac{\paren{1-r_s/4\riso}}{\paren{1+r_s/4\riso}^3}\frac{\p}{\p\riso}\paren{\frac{\paren{1+r_s/4\riso}^3}{\paren{1-r_s/4\riso}}}\frac{\p\phi}{\p\riso}
=\frac{2\pi\rho}{\mathcal{L}_{F}\paren{\boldsymbol{\nabla}\phi}}\paren{1-\frac{r_s^2}{16\riso^2}}^2-\frac{1}{\mathcal{L}_F\paren{\boldsymbol{\nabla}\phi}}\boldsymbol{\nabla}\phi\cdot\boldsymbol{\nabla}\mathcal{L}_F\ .
\en
This is a nonlinear equation for the electrostatic potential, 
as a consequence of the nonlinear (in general) dependence of the excitation tensor with $F^{\mu\nu}$. When the theory under consideration is Maxwell's electrodynamics, \textit{i.e.,} $\mathcal{L}_F=-1/2$, the resulting electrostatic equation reads
\be
\label{phim}
\Delta\phi_M+\frac{\paren{1-r_s/4\riso}}{\paren{1+r_s/4\riso}^3}\frac{\p}{\p\riso}\paren{\frac{\paren{1+r_s/4\riso}^3}{\paren{1-r_s/4\riso}}}\frac{\p\phi_M}{\p\riso}=-4\pi\rho\paren{1-\frac{r_s^2}{16\riso^2}}^2
\en
 The solution of this equation is the potential of a static point charge in Maxwell's EM in Schwarzschild's spacetime. The analytic form of this  solution 
found by Copson \cite{Copson1928} was reformulated by Linet \cite{Linet1976}, who added a monopole term inside the horizon, in order to account for the correct asymptotic limit. Linet's  solution for a particle with charge $e$ located at a position $\riso=\aiso$ is
\be\label{eq:psiLin}
\phi_{M(L)}=\frac{e\Sigma\paren{\aiso}^{-1}}{\riso\Sigma\paren{\riso}\mu}\paren{\mu+\frac{r_s}{4\aiso}}^2\ ,
\en
with 
$$\mu\equiv\sqrt{\frac{\paren{\riso-\biso}^2+2\biso\riso\paren{1-\cos\theta}}{\paren{\riso-\aiso}^2+2\aiso\riso\paren{1-\cos\theta}}}\ , \ \biso\equiv\frac{r_s^2}{16\aiso}\ .$$
The potential \eqref{eq:psiLin} can be decomposed as a linear superposition of two solutions. Defining
\begin{align}
\phi_{M(C)}&= \frac{e}{\aiso\Sigma(\aiso)\riso\Sigma(\riso)}\paren{\mu\aiso+\frac{\biso}{\mu}}\ , \label{eq:Copson}\\
\phi_{M(p)}&=\frac{e r_s}{2\aiso\Sigma(\aiso)\riso\Sigma(\riso)} \label{eq:Vp}\ ,
\end{align}
it follows that
\be
\phi_{M(L)}=\phi_{M(C)}+\phi_{M(p)}\ ,
\en
where $\phi_{M(C)}$ is the solution originally obtained by Copson \cite{Copson1928}. 
The second term can also be interpreted as arising from the fact that the black hole horizon acts as a conducting surface, therefore inducing an image particle inside the horizon that its ultimately responsible for the potential $\phi_{M(p)}$. This phenomenon is also known as black hole polarization and has interesting consequences \cite{Bekenstein1999,Hod1999}.

Let us now turn to the case of a static charged source for a  generic theory of NLED in Schwarzschild's background. The electric potential obeys Eq. \eqref{eq:potNLEDcurv}.
The solutions of which  should, in principle, be obtained by resorting to 
numerical integration and/or analytical approximations. However, for the case of an arbitrary theory of NLED in Schwarzschild spacetime, it was shown in \cite{Falciano2019} that the following theorem holds:
\begin{theorem}\label{thm:1}
The electrostatic potential $\phi(x)$ produced by a charged particle satisfying a generic NLED theory $\mathcal{L}(F,G)$ in a static, spherically symmetric spacetime is entirely specified by the electrostatic potential $\phi_M(x)$ satisfying the equations of Maxwell's electrostatics in the same background, such that $\phi_M(x)$ and $\phi(x)$ have the same asymptotic behaviour. The displacement vector, given by $\mathbf{D}=-\boldsymbol{\nabla}\phi_M(x)$, is curl-free.
\end{theorem}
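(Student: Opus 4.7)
The plan is a direct constructive verification. I define the candidate $\mathbf{D}=-\boldsymbol{\nabla}\phi_M$, where $\phi_M$ is the Maxwellian electrostatic potential in the same spherically symmetric background (for Schwarzschild, the Copson-Linet solution of Eq. \eqref{phim}), check that it solves the inhomogeneous NLED electrostatic equations, and then reconstruct $\mathbf{E}$ from it via the NLED constitutive law and integrate to obtain $\phi$.

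The setup exploits the simplifications of the electrostatic sector. Since $G=0$ identically, the electric components of the excitation tensor \eqref{eq:defexcitation} reduce to $E^{0i}=2\mathcal{L}_F F^{0i}$ (the $\mathcal{L}_G\widetilde{F}^{\mu\nu}$ piece only contributes to the magnetic sector $\mathbf{H}$), so $\mathbf{D}$ is parallel to $\mathbf{E}$. The Bianchi identity $\p_\mu\widetilde{F}^{\mu\nu}=0$ gives $\mathbf{E}=-\boldsymbol{\nabla}\phi$, and the inhomogeneous equation \eqref{eq:fieldNLEDcurved} collapses in the static sector to the covariantised Poisson equation $\boldsymbol{\nabla}\cdot\mathbf{D}=4\pi\rho$. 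If $\mathbf{D}$ were known to be curl-free, it would be derivable from a scalar potential $\psi$ satisfying this same Poisson equation; matching the Newtonian asymptotics would then force $\psi=\phi_M$ by the uniqueness of Maxwell's electrostatics in the same background (Eq. \eqref{phim}). The theorem thus reduces to proving that $\mathbf{D}$ is curl-free.

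This is the main obstacle. Writing $\mathbf{D}=-\kappa(|\boldsymbol{\nabla}\phi|^2)\,\boldsymbol{\nabla}\phi$ with $\kappa$ a scalar fixed by $\mathcal{L}_F$, one finds $\boldsymbol{\nabla}\times\mathbf{D}=-\boldsymbol{\nabla}\kappa\times\boldsymbol{\nabla}\phi$, which vanishes only if $\boldsymbol{\nabla}\kappa$ is collinear with $\boldsymbol{\nabla}\phi$. Since $\kappa$ depends on position only through $|\boldsymbol{\nabla}\phi|$, the requirement becomes that $|\boldsymbol{\nabla}\phi|$ be constant on the equipotentials of $\phi$ --- the nontrivial geometric content of the theorem. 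It holds trivially for a source at the centre of symmetry, where $\phi=\phi(\riso)$; for an off-centre particle in a spherically symmetric background it can be verified from the residual axial symmetry of the one-particle configuration together with the explicit structure of the Copson-Linet potential, whose dependence on $\mu$ and on the conformal factor $\Sigma$ is precisely engineered so that this collinearity holds.

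Once curl-freeness is established and $\mathbf{D}=-\boldsymbol{\nabla}\phi_M$ identified, the electric field is recovered algebraically from $\mathbf{D}$ through the NLED constitutive law --- for Born-Infeld, Eq. \eqref{eq:normEBI} --- and $\phi$ is obtained by line integration, $\phi(x)=-\int_{\infty}^{x}\mathbf{E}\cdot\dd\boldsymbol{\ell}$, which is path-independent by the Bianchi identity $\boldsymbol{\nabla}\times\mathbf{E}=0$. The integration constant is fixed by matching $\phi\to\phi_M$ at spatial infinity, as required by the theorem.
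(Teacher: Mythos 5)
Your overall architecture coincides with the paper's: reduce the electrostatic sector to $\boldsymbol{\nabla}\cdot\mathbf{D}=4\pi\rho$ with $\mathbf{D}=2\mathcal{L}_F\boldsymbol{\nabla}\phi$ parallel to $\mathbf{E}$, identify $\mathbf{D}$ with $-\boldsymbol{\nabla}\phi_M$ through uniqueness of the Maxwell problem with prescribed asymptotics, and recover $\mathbf{E}$ and $\phi$ by inverting the constitutive law (the paper does this via the auxiliary potential $\psi=-2\int\mathcal{L}_F\boldsymbol{\nabla}\phi\cdot\dd\mathbf{l}$ and the $\mathcal{H}(P,S)$ inversion). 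You are also right --- and more explicit than the paper --- that the entire weight of the theorem rests on an integrability condition: the rescaled field ($\mathbf{D}$ in your direction, or $\mathbf{E}=2\mathcal{H}_P\mathbf{D}$ in the constructive direction) must be curl-free, which amounts to $\boldsymbol{\nabla}\vert\boldsymbol{\nabla}\phi\vert$ being collinear with $\boldsymbol{\nabla}\phi$, i.e.\ to the gradient magnitude being constant on equipotentials. The paper buries this in the definition of $\psi$ as a line integral, whose path-independence is precisely the point at issue.

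The genuine gap is your resolution of that condition. Residual axial symmetry does not deliver it: in an axisymmetric configuration both $\phi$ and $\vert\boldsymbol{\nabla}\phi\vert$ are functions of the two variables $(\riso,\theta)$, and the collinearity requirement is the nontrivial statement that their Jacobian vanishes, i.e.\ that they are functionally dependent. A monopole-plus-dipole field $\phi=1/r+\epsilon\cos\theta/r^2$ in flat space is axisymmetric and violates it, so symmetry alone cannot be the reason. The appeal to the ``engineered'' structure of the Copson--Linet potential is also problematic on two counts: it is asserted rather than verified (the joint dependence of $\phi_{M(L)}$ on $\riso$ and $\mu$ makes any functional dependence of $\vert\boldsymbol{\nabla}\phi_M\vert$ on $\phi_M$ far from evident), and it is circular in your framing, since the collinearity you need involves $\boldsymbol{\nabla}\phi$ of the yet-to-be-constructed NLED potential while Copson--Linet only controls $\phi_M$; it would moreover tie the theorem to Schwarzschild, whereas the statement covers an arbitrary static spherically symmetric background. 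To close the argument you would need either to establish the functional dependence $\vert\boldsymbol{\nabla}\phi_M\vert=h(\phi_M)$ explicitly for the relevant class of solutions, or to follow the paper's (logically weaker) route: posit the curl-free ansatz $\mathbf{D}=-\boldsymbol{\nabla}\phi_M$, check that the induced $\phi$ satisfies Eq.~\eqref{eq:potNLEDcurv} whenever $\phi_M$ satisfies Eq.~\eqref{phim}, and invoke uniqueness --- accepting that the integrability of $2\mathcal{L}_F\boldsymbol{\nabla}\phi$ is an assumption rather than a conclusion.
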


\begin{proof}
   In the electrostatic case, $\mathbf{B}=0$ and 
   there is no time dependence. The field equation (\ref{eq:fieldNLEDcurved}) reads
    \be \label{eq:thm1}
    \p_\mu\paren{\sqrt{-g}D^\mu}=-4\pi\sqrt{-g}\rho\ ,
    \en
    where the density is defined as $\rho=j_\alpha v^\alpha$, and we have used the fact that the four-velocity of a static particle, $v^\mu=c\delta^{\mu}_{\ 0}/\sqrt{g_{00}}$ satisfies $\nabla_\mu v^\mu=\p_\mu v^\mu=0$.
    
    Using $D^\mu=-2\mathcal{L}_F\paren{E}E^\mu$ 
    (see Eq. \eqref{eq:defexcitation})
    in Eq. \eqref{eq:thm1},
with the electric field given by $E^\mu=\paren{0,-\boldsymbol{\nabla}\phi}$ in the reference frame where the particle is at rest, it follows that the electrostatic potential obeys Eq. \eqref{eq:potNLEDcurv}. 
    
Let $\psi(x)$ be an auxiliary scalar function defined as 
\be\label{eq:thm2}
\psi(x)=-2\int\mathcal{L}_F\paren{\boldsymbol{\nabla}\phi}\boldsymbol{\nabla}\phi\cdot\dd\mathbf{l}\ ,
\en
where the integral is to be calculated along a path with tangent vector $\dd\mathbf{l}$. A straightforward calculation shows that if $\psi$ satisfies Eq. \eqref{phim}, then $\phi$ is a solution of Eq. \eqref{eq:potNLEDcurv}. Hence, $\psi\equiv \phi_M$, and the displacement vector field reads $\mathbf{D}=
-\boldsymbol{\nabla}\phi_M=
2\mathcal{L}_F\paren{\boldsymbol{\nabla}\phi}\boldsymbol{\nabla}\phi$.  Furthermore, assuming that the constitutive equations are invertible, the $P$ framework allows us to write the Faraday tensor as a function of the excitation tensor and its dual. Still in the electrostatic case, it follows from Eq. \eqref{eq:Hframe} that $\boldsymbol{\nabla}\phi=-2\mathcal{H}_P\paren{\boldsymbol{\nabla}\psi}\boldsymbol{\nabla}\psi$. 
\end{proof}

The theorem allows us to express the electric displacement field, $\mathbf{D}$
as the gradient of the  potential for the same problem in Maxwell's theory, namely $\mathbf{D}=-\boldsymbol{\nabla}\phi_M$, where $\phi_M$ is given by Eq. \eqref{eq:psiLin}. For a given theory of NLED, the electric field can be calculated using Eq. \eqref{eq:FrelE}. In particular, 
the exact expression for the electric field in BI theory can be obtained from
Eq.\eqref{eq:EBvectors}, yielding
\be\label{eq:EexactBI}
\mathbf{E}=-\boldsymbol{\nabla}\phi=-\frac{\boldsymbol{\nabla}\phi_M}{\sqrt{1+\vert\boldsymbol{\nabla}\phi_M\vert^2\beta^{-2}}}\ .
\en

\subsection{Black hole polarization and approximated expression for the potential}

The electrostatic potential $\phi$ 
for a BI particle in Schwarzschild's background, follows by numerical integration of Eq. \eqref{eq:EexactBI}. Let us show now that an approximated expression for $\phi$ 
near and at the particle can be obtained as discussed in \cite{Falciano2019}.  First, note that the exact expression for the electric field, Eq. \eqref{eq:EexactBI}, has a weak-field limit that coincides with that of  Maxwell's theory. Hence, far away from the particle the well-known Maxwell behavior is recovered. On the other hand, since the potential $\phi_M$ vanishes on the horizon, the electric field (given by Eq. \eqref{eq:EexactBI}) will also vanish there, independently of the position of the charge. This is in agreement with the interpretation of the horizon as a conducting surface, and hence the idea of black hole polarization is still relevant in this context.

Since, in BI electrodynamics, nonlinearities become {important} near the sources, one may wonder what happens with the nonlinearities of both the particle and its image charge when the particle is located near the horizon. Black hole polarization ensures us 
that the nonlinearity regions will remain confined both outside and inside the horizon for the real and image particle correspondingly. This is illustrated in Figure \ref{fig:BIreg}.
\begin{figure}[ht]
\centering
\includegraphics[width=0.47\textwidth]{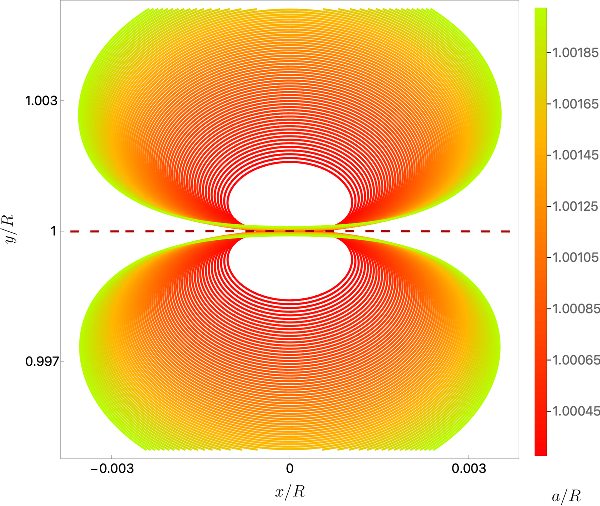}
\caption{The nonlinearity region,  modelled, in Cartesian coordinates, as the region where {$\vert\mathbf{D}\vert=\vert\boldsymbol{\nabla}\phi_{M(C)}\vert\sim\beta$}, where {$\phi_{M(C)}$} is given by Eq. \eqref{eq:Copson}, for $e=1$, $\beta=3$. All quantities have been normalized by the black hole's mass. Each coloured contour displays the nonlinearity region for a particular position of the charge, $a/R$. It is seen that as the charge approaches the horizon the region shrinks and never touches the horizon, for both for the real particle and its image charge inside the horizon.
}
\label{fig:BIreg}
\end{figure}

The fact that black hole polarization is still present
when considering BI 
electrodynamics implies that the polarization contribution $\phi_{M(p)}$ in Eq. \eqref{eq:Vp} will still be present. Therefore we can approximate the electric field outside the horizon as a sum of two terms as
\be \label{eq:apEBIapprox}
\mathbf{E}\paren{\riso,\theta}=-\boldsymbol{\nabla}\phi\approx-\frac{\boldsymbol{\nabla} \phi_{M(C)}}{\sqrt{1+\vert\boldsymbol{\nabla} \phi_{M(C)}\vert^2\beta^{-2}}}-\boldsymbol{\nabla} \phi_{M(p)}\ .
\en
The first term appearing on the r.h.s. of Eq. \eqref{eq:apEBIapprox} is the regularized
expression for the electric field coming from Copson's solution while the second term corresponds to the black hole polarization contribution to the electric field. Eq. \eqref{eq:apEBIapprox} leads to the following decomposition of the potential:
\be \label{eq:appotBIapprox}
\phi\paren{\riso,\theta}\approx\bar{\phi}\paren{\riso,\theta}+\phi_{M(p)}\paren{\riso,\theta}\ ,
\en
where $\bar{\phi}(\riso,\theta)$ is given by
\be
\bar{\phi}=\int\frac{\boldsymbol{\nabla}\phi_{M(C)}}{\sqrt{1+\vert\boldsymbol{\nabla}\phi_{M(C)}\vert^2\beta^{-2}}}\cdot\dd\mathbf{l}\ .
\en
 Unfortunately, the functional form of $\phi_{M(C)}$ (see Eq. \eqref{eq:Copson}) forbids analytical integration,  but
we can resort to a further approximation. Copson's potential \eqref{eq:Copson} diverges as the position of the charge is approached, namely as $\riso\to\aiso$. Therefore we can analyze the behavior of Copson's potential as it starts diverging (i.e., as $\mu\to\infty$ in Eq. \eqref{eq:Copson}), thus obtaining\footnote{See the appendix in \cite{Falciano2019} for details on the approximations  performed here.}
\be\label{eq:apapproxCopsonfield}
\lim_{\riso\to\aiso}\boldsymbol{\nabla}\phi_{M(C)}\approx\frac{\phi_{M(C)}}{\mu}\boldsymbol{\nabla}\mu\quad \Rightarrow\qquad \vert\boldsymbol{\nabla}\phi_{M(C)}\vert\approx\beta\frac{\phi_{M(C)}^2}{\zeta_{\aiso}^2}\ ,
\en
where $\zeta_{\aiso}\equiv\sqrt{e \beta}\paren{g_{00}\paren{\aiso}}^{1/4}$. With the approximation \eqref{eq:apapproxCopsonfield} it is possible to integrate Eq. \eqref{eq:apEBIapprox} and obtain, after imposing that the asymptotic behavior of the potential coincides with that of Maxwell's, that the electrostatic potential for a BI particle immersed in a Schwarzschild background {can be approximated in the neighbourhood of the particle by}
\be
\bar{\phi}\paren{\riso,\theta}=\zeta_{\aiso} \paren{\frac{\Gamma\paren{\frac{1}{4}}^2}{4\sqrt{\pi}}-\frac{\zeta_{\aiso}}{\phi_{M(C)}}\ _2F_1\left[\frac{1}{2},\frac{1}{4};\frac{5}{4};-\paren{\frac{\zeta_{\aiso}}{\phi_{M(C})}}^4\right]}\ .
\en

\section{BI particle near the horizon of a Schwarzschild black hole as a BI particle in Rindler spacetime} \label{sec:Rindler1}

We are interested in the field of an accelerated Born-Infeld source in flat spacetime. As previously stated, the usual approach to obtain a solution for the potential of the accelerated source is not feasible in this case via the Green's function method. Nonetheless, we can exploit the equivalence between a static observer close to the horizon of a Schwarzschild black hole and an accelerated observer in flat spacetime: both are characterized by Rindler {coordinates}. 

The starting point is the expansion of  the Schwarzschild metric near the horizon as (see \ref{app:PotRind} for a detailed derivation)
\be \label{eq:dslambda}
\dd s^2=2\kappa_s\lambda\dd t^2-\frac{1}{2\kappa_s\lambda}\dd\lambda^2-\dd\rho^2-\rho^2\dd\varphi^2\ ,
\en
where $\kappa_s$ is the 
surface gravity of Schwarzschild's black hole . This metric reduces to Minkowski spacetime when $\lambda=\lambda_0=1/2\kappa_s=r_s$. Then, in this coordinate system the particle is at rest at $\lambda=\lambda_0$. By performing the corresponding expansions, Copson's potential, Eq. \eqref{eq:Copson}, for the near-horizon approximation reads
\be \label{eq:Copsonlambda}
\phi_{M(C)}\paren{\lambda}=\frac{e}{2r_s}\frac{\lambda+\lambda_0+\rho^2\kappa_s/2}{\sqrt{\left[\lambda-\lambda_0+\rho^2\kappa_s/2\right]^2+2\kappa_s\lambda_0\rho^2}}\ .
\en
It is important to point out that this expression is the exact electric potential of a uniformly accelerated charged particle in Maxwell's theory in a comoving frame \cite{Gupta1998}.

The solution for the electrostatic potential in a region arbitrarily close to the horizon of a Schwarzschild black hole is given by Eq. \eqref{eq:appotBIapprox}, where $\bar{\phi}$ is to be identified with $\phi_{M(C)}\paren{\lambda}$ in Eq. \eqref{eq:Copsonlambda}. Now, it is important to stress that when approximating Schwarzschild to Rindler spacetime, a certain amount of information is lost due to neglecting the spacetime curvature effects on the particle's field acting over itself \cite{Smith1980,Poisson2011}, which can alternatively be interpreted as losing the contribution on the potential coming from the effect of black hole polarization, \textit{i.e.}, $\phi_{M(p)}$ will no longer be present when considering Rindler spacetime \cite{MacDonald1985}. Hence, since we are solely interested in the Rindler's limit for Schwarzschild spacetime, the electric field is given exactly by the relation
\be \label{eq:EfieldRin}
\mathbf{E}=-\boldsymbol{\nabla}\phi=-\frac{\boldsymbol{\nabla} \phi_{M(C)}}{\sqrt{1+\vert\boldsymbol{\nabla} \phi_{M(C)}\vert^2\beta^{-2}}}\ , 
\en
and, in order to obtain the electrostatic potential, it suffices to integrate 
\be \label{eq:elecpotRin}
\phi=-\int^{r}_{\infty}\mathbf{E}\cdot\dd\mathbf{l}\ .
\en
Let us use the expressions obtained above to calculate some relevant quantities. Figure \ref{fig:nomElambda} shows the norm of the electric field, given by Eq. \eqref{eq:EfieldRin},
at different positions with respect to the charge (located at $\lambda=\lambda_0)$. As expected, the electric field is regular at
the position of the charge and it coincides with the usual Maxwell solution far from the charge. 

In order to obtain an exact expression for 
the electrostatic potential in the system comoving with the accelerated particle,  we can follow the approach used in the previous section for Schwarzschild spacetime and directly integrate the electric field.
The result is (see \ref{app:BISbh} for details)
\be \label{eq:apppot}
\phi(\lambda)=\Delta\paren{\frac{\Gamma\paren{\frac{1}{4}}^2}{4\sqrt{\pi}}-\frac{\Delta}{\phi_{M(C)}(\lambda)}\ _2F_1\left[\frac{1}{2},\frac{1}{4},\frac{5}{4};-\paren{\frac{\Delta}{\phi_{M(C)}(\lambda)}}^4\right]}\ ,
\en
where $\Delta\equiv\sqrt{e\beta}$.
At first glance, it may seem that the spacetime geometry plays no role when calculating the potential at the position of the particle. Nevertheless, 
the redshift must be taken into account (as in the Schwarzschild case discussed above). For the coordinate system used it is simply $g_{00}(\lambda_0)=1$. Moreover, the value of the potential at the {location of the} charge is
\be
\phi\paren{\lambda_0}=\sqrt{e\beta}\frac{\Gamma\paren{\frac{1}{4}}^2}{4\sqrt{\pi}}\ ,
\en
which coincides with the value of the electrostatic potential of a static BI particle in flat spacetime (see Eq. \eqref{eq:potBIfst}). 

Figure \ref{fig:relerrorpot} displays the relative error of the approximation given by Eq. \eqref{eq:apppot} compared to the value obtained by numerical integration of \eqref{eq:EfieldRin}, for the potential at the 
position of the particle. It is seen that the error is almost independent of the value of the electric charge
for small values of the charge, and that for high values of the maximum field parameter $\beta$, it drastically decreases. In particular, for large values of $e$ and $\beta$, the relative error is approximately $10^{-4}$.

To calculate the nontrivial components of the Faraday tensor, the following derivatives are needed:
\begin{align}
\p_\lambda\phi&=\frac{1}{\sqrt{1+\Phi^4}}\p_\lambda \phi_{M(C)}\ , \label{eq:dlambdaphi}\\
\p_\rho\phi&=\frac{1}{\sqrt{1+\Phi^4}}\p_\rho \phi_{M(C)}\ , \label{eq:drhophi}
\end{align}
where 
$\Phi\equiv \phi_{M(C)}/\Delta$, and certain properties for the derivatives of the hypergeometric function were used \cite{Lebedev1965}. It follows that the non-trivial components of the Faraday tensor are given in exact form by:
\begin{align}
F^{\lambda0}=-F_{\lambda0}=-\p_\lambda\phi\ , \label{eq:42}\\
F^{\rho0}=-\frac{1}{2\kappa_s\lambda}F_{\rho0}=-\frac{1}{2\kappa_s\lambda}\p_\rho\phi\ ,\label{eq:43}
\end{align}
and the electric field components, given by Eq. \eqref{eq:Lorentz}, are
\begin{align}
E^{\lambda}&=F^{\lambda}_{\ 0}/\sqrt{g_{00}} \ ,\\
E^{\rho}&=F^{\rho}_{\ 0}/\sqrt{g_{00}}\ .
\end{align}
We analyze next the behavior of the electric field near the charged particle. Since the potential is regular at the 
position of the particle and, as stated before, the redshift factor there is simply $g_{00}(\lambda_0)=1$, we arrive at the following expressions:
\begin{align}
E^\lambda&=\frac{2\kappa_se\lambda_0}{\omega^3}\sqrt{2\kappa_s\lambda}\frac{\lambda-\lambda_0-\rho^2\kappa_s/2}{\sqrt{1+\Phi^4}}\ , \label{eq:Elambda}\\
E^\rho&=\frac{2\kappa_se}{\omega^3}\lambda_0\rho\sqrt{2\kappa_s\lambda}\frac{1}{\sqrt{1+\Phi^4}}\ , \label{eq:Erho}
\end{align}
where {we have defined}
\be
\omega\equiv\sqrt{\left[\lambda-\lambda_0+\rho^2\kappa_s/2\right]^2+2\kappa_s\lambda_0\rho^2}\ .
\en
Let us note that, contrary to  Maxwell's case, when 
the position of the particle is approached, namely {as} $\rho\to0$ and $\lambda\to\lambda_0$, only regular terms are present, given by 
\begin{align}
E^\lambda&\approx\beta+\mathcal{O}\paren{\rho^2}\ , \\
E^\rho&\approx\mathcal{O}\paren{\rho}\ .
\end{align}

\begin{figure}[ht]
\centering
\includegraphics[width=0.47\textwidth]{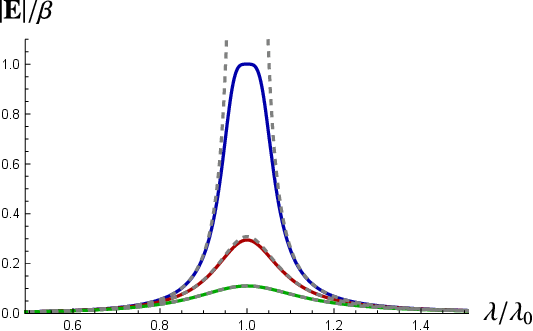}
\caption{Plot of the norm of the electric field for Born-Infeld (solid lines) and Maxwell electrodynamics 
(dashed lines) in the $\paren{\lambda,\rho}$ coordinate system. The BI electric field is given by \eqref{eq:EfieldRin} 
while the Maxwell counterpart is just the gradient of 
Copson's potential . We have set $e=0.01$ and $\rho=0.001$ (blue curve), $\rho=0.18$ (red curve) and $\rho=0.3$ (green curve).
}
\label{fig:nomElambda}
\end{figure}

\begin{figure}[ht]
\centering
\includegraphics[width=0.47\textwidth]{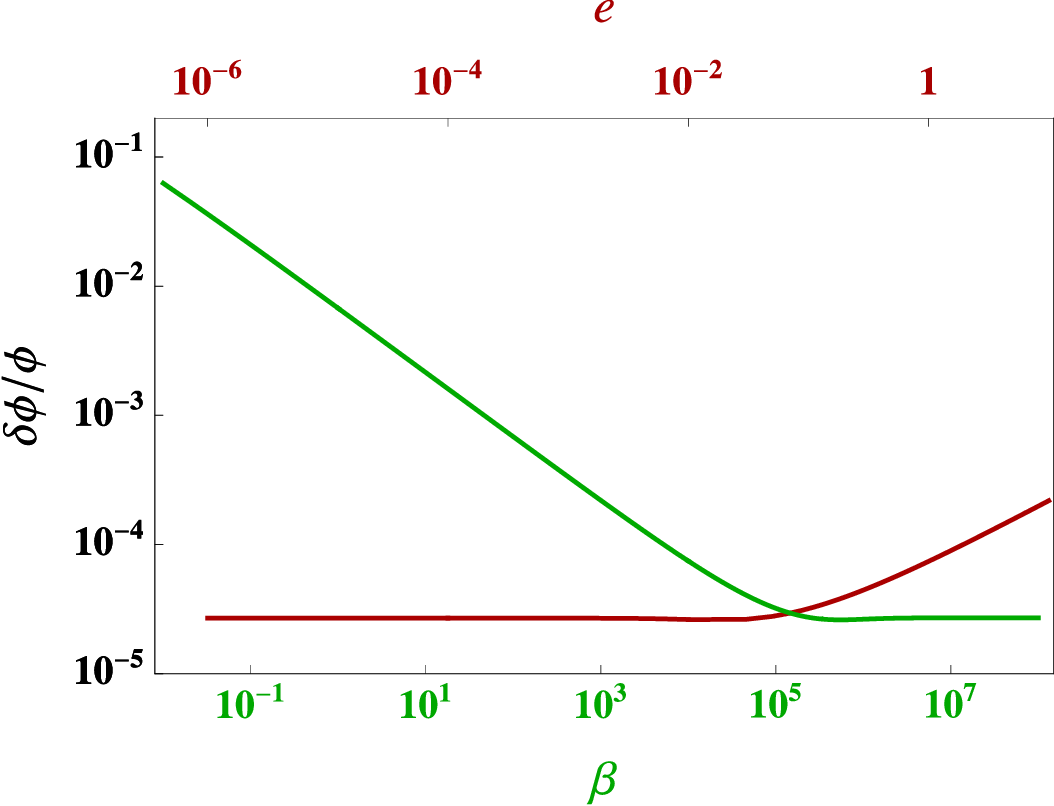}
\caption{Relative error of the approximation of the potential at the position of the particle, given by Eq. \eqref{eq:apppot}, with respect to the exact potential obtained by numerically integrating Eq. \eqref{eq:EfieldRin}. 
The relative error was calculated as a function of $\beta$ for $e=0.01$,
(green plot) and $e$ 
for $\beta=10^6$
(red plot). The error drastically decreases for large values of $\beta$, while small values of $e$ do not contribute significantly to the error. Geometrized units were used in the plot, and  we have set $\rho=10^{-4}$.
}
\label{fig:relerrorpot}
\end{figure}

\section{Transformation to the inertial coordinate system}\label{sec:inertcomp}

The calculations 
presented in the previous section
are valid for a BI particle arbitrarily close to the horizon of a Schwarzschild black hole. At this limit, the static particle outside the horizon behaves exactly like a linearly accelerated particle in flat spacetime and is described in terms of Rindler coordinates adapted to the particle. 

Gupta and Padmanabhan \cite{Gupta1998} considered an accelerated charged particle in Maxwell's theory in comoving (Rindler) coordinates, and showed that the corresponding field components expressed in inertial coordinates are those of the Liénard-Wiechert field of a linearly accelerated charge. In this section we shall follow Ref. \cite{Gupta1998} in order to obtain the relevant electromagnetic field components in the laboratory frame for a linearly accelerated BI source, expressed in Rindler (comoving) coordinates in Eqs. \eqref{eq:Elambda}-\eqref{eq:Erho}.

{\subsection{Coordinate transformation}}
\label{cotra}
In brief, let us consider the motion of the accelerated particle in two different systems. In what follows, the system $S$ is an inertial system 
with respect to which the particle is accelerated; while the system $S'$ is a non-inertial system adapted to the accelerated particle.

Let $\{x^{\mu}\}=\paren{t,z,\rho,\phi}$ be the coordinates corresponding to the inertial system $S$. We shall consider an event $P$ (at which the value of the fields is to be calculated) with coordinates $\paren{t,z,\rho,\phi}$, and the retarded event $O$, with coordinates $\paren{t_0,z_0,0,0}$. At $O$, the proper acceleration is given by
\be
g=\sqrt{-a^\mu a_\mu}=a_{\ret}\gamma^3\ ,
\en
where $\gamma\equiv\paren{1-v_{\ret}^2}^{-1/2}$, and $v_{\ret}$, $a_{\ret}$ are the velocity and acceleration at event $O$, respectively. Let us introduce another system, $S'$, with coordinates 
$\{\xi^\alpha\}=\paren{\tau,\lambda,\rho,\phi}$, which coincides with the world line of the charge up to $v^\mu$ and $a^\mu$ at $O$, in such a way that at $O$, in the frame $S'$, the charge is instantaneously at rest without acceleration, and the constant, proper, acceleration of $S'$ is $g$.
In particular, let us assume that the uniformly accelerated motion takes place along the $z$ axis. The transformation between $S$ and $S'$ is
\begin{align}
t'&=\frac{\sqrt{2g\lambda}}{g}\sinh\paren{g\tau}\ , \label{eq:tprime}\\
z'&= \frac{\sqrt{2g\lambda}}{g}\cosh\paren{g\tau}\ ,\label{eq:zprime}
\end{align}
where we have defined the new inertial coordinates $t'$ and $z'$ as
\begin{eqnarray}
    t'&=&t-t_0+\frac{\gamma v_{\ret}}{g}\ , \\
    z'&=&z-z_0+\frac{\gamma}{g}\ .
\end{eqnarray}
In these coordinates, the event $O$ takes place at
\begin{align}
&t'_0=\frac{\gamma v_\ret}{g}\ , \ z'_0=\frac{\gamma}{g}\ ,\label{eq:t0}\\
\Rightarrow &\lambda_0=\frac{1}{2g}\ , \ \tau_0=\frac{1}{g}\sinh^{-1}\paren{\gamma v_\ret}\ . \label{eq:eventOlambda}
\end{align}
Let us define the distance between a point $P$ and the event $O$ in $S$ by the null vector
\be\label{eq:Rmu}
R^\mu=\paren{t'-t'_0,z'-z'_0,\rho,\phi}\ .
\en
From $R^\mu R_\mu=0$ it follows that
\be \label{eq:coshid}
\cosh g\paren{\tau-\tau_0}=\frac{1}{2\sqrt{\lambda\lambda_0}}\paren{\lambda+\lambda_0+g\rho^2/2}\ ,
\en
and, since the non-trivial components of the velocity are
\begin{align}
v^0_\ret&=\sqrt{2g\lambda_0}\cosh\paren{g\tau_0}\ , \label{eq:v0ret}\\
v^z_\ret&=\sqrt{2g\lambda_0}\sinh\paren{g\tau_0}\ ,\label{eq:vzret}
\end{align}
we obtain
\be\label{eq:74}
R_\mu v^\mu_\ret=2\sqrt{\lambda\lambda_0}\sinh g\paren{\tau-\tau_0}\ .
\en
Using the identity
given by Eq. \eqref{eq:coshid}, in the latter equation,
\begin{align}
    R_\mu v^\mu_\ret&=\sqrt{\paren{\lambda-\lambda_0+g\rho^2/2}^2+2g\rho^2\lambda_0} \label{eq:75} \\
    &=\frac{g}{2}\sqrt{\paren{z'^2-t'^2-\paren{1/g}^2+\rho^2}^2+4\rho^2/g^2}\ . \label{eq:66}
\end{align}
Additionally, applying a Lorentz transformation into Eqs. \eqref{eq:v0ret} and \eqref{eq:vzret} we obtain that the time component of the four-velocity in an arbitrary local inertial frame is given by 
\be
v^0_\mathrm{iner}=\sqrt{2g\lambda}\cosh g\paren{\tau-\tau_0}\ .
\en
Let us apply these transformations to 
the potential. 
Since we are interested in mapping Copson's solution  to an inertial reference frame and adapting it to the case of a linearly accelerated particle, we can replace the black hole's surface gravity with a parameter $g$ that denotes the particle's proper acceleration, \textit{i.e.}, $\kappa_s\to g$
in Eq. \eqref{eq:Copsonlambda}, yielding
\be \label{eq:Copsonlambda1}
\phi_{M(C)}(\lambda)=eg\frac{\lambda+\lambda_0+\rho^2g/2}{\sqrt{\left[\lambda-\lambda_0+\rho^2g/2\right]^2+2g\lambda_0\rho^2}}\ .
\en
From Eqs. \eqref{eq:tprime} and \eqref{eq:zprime}, $\phi_{M(C)}$ in the coordinates of the inertial system takes the form
\be \label{eq:Copsoninert}
\phi_{(M)C}\paren{t',z'}=eg\frac{z'^2-t'^2-\paren{1/g}^2+\rho^2}{\sqrt{\paren{z'^2-t'^2-\paren{1/g}^2+\rho^2}^2+4\rho^2/g^2}}\ .
\en
Using the relations presented above, we can write this potential,  in an arbitrary inertial frame as
\be
\label{lwpot}
\phi_{M(C)}=e\frac{v^0_\mathrm{iner}}{R_\mu v^\mu_\ret}\ ,
\en
which is precisely the known result for the scalar potential in LW solution [see \ref{app:LWpot}].

\subsection{Transformation of the fields}
Given the transformation between the inertial, $\{x^\mu\}$, and comoving coordinates, $\{\xi^\mu\}$,
 the Faraday and excitation tensors in {both coordinate systems} are related  by the transformation
\be\label{eq:LWapproximpl}
\mathcal{F}^{\mu'}_{\ \nu'}(x)=\frac{\p x^{\mu'}}{\p \xi^{\alpha}}\frac{\p\xi^{\beta}}{\p x^{\nu'}}F^{\alpha}_{\ \beta}(\xi)\ ,
\en
\be\label{eq:transEmunu}
\mathcal{E}^{\mu'}_{\ \nu'}(x)=\frac{\p x^{\mu'}}{\p \xi^{\alpha}}\frac{\p\xi^\beta}{\p x^{\nu'}}E^{\alpha}_{\ \beta}(\xi).
\en
Furthermore, in view of the constitutive relation (Eq. \eqref{eq:FrelE}), Eq. \eqref{eq:LWapproximpl} can be expressed in terms of the excitation tensor as
\be \label{eq:Fmunuinert}
\mathcal{F}^{\mu'}_{\ \nu'}(x)=-\frac{\p x^{\mu'}}{\p \xi^{\alpha}}\frac{\p\xi^{\beta}}{\p x^{\nu'}}\frac{1}{\sqrt{V}}\paren{E^{\alpha}_{\ \beta}+\frac{S}{2\beta^2}\widetilde{E}^{\alpha}_{\ \beta}}\ .
\en
The field $\mathcal{F}^{\mu'}_{\ \nu'}$
is the analogous in BI NLED, for constant acceleration, of the Liénard-Wiechert field in Maxwell's theory. In order to obtain such field, it suffices to know the components of the excitation tensor, $E^{\mu\nu}$, in the comoving frame. In the case at hand, 
this is equivalent to the calculation of $\mathbf{D}$, 
which follows from the results of Theorem \ref{thm:1} quoted in Sec. \ref{ssec:BIpart}  (and originally presented in \cite{Falciano2019}). The theorem guarantees that once the  potential $\phi_M$, is known in the comoving frame, the calculation of the components of $E^{\mu\nu}$ is straightforward. In a general 
case, the calculation of $\mathbf H$ would also be needed. We shall take instead another path to obtain the components of the Faraday tensor in the inertial frame. In the last section we have explicitly calculated the nontrivial components of $F^\alpha_{\ \beta}(\xi)$ in the comoving frame. We now proceed to use these results to calculate the field components in the laboratory frame using the results for the derivatives of the approximated potential (\eqref{eq:dlambdaphi} and \eqref{eq:drhophi}) in Eqs. \eqref{eq:42} and \eqref{eq:43}, and transform the latter using \eqref{eq:LWapproximpl}. The non-trivial components of the Faraday tensor in the comoving frame, $F^\alpha_{\ \beta}\paren{\xi}$, are those associated with the $\lambda$ and $\rho$ components of the field. Therefore the relevant components of the electromagnetic field in the laboratory frame are, explicitly
\begin{align}
\mathcal{F}^{z'}_{\ t'}&=\frac{4e}{g^2}\frac{1}{\sqrt{1+\Phi^4}}\frac{\left(z'^2-t'^2-(1/g)^2-\rho^2\right)}{\left(\left(z'^2-t'^2-(1/g)^2+\rho^2\right)^2+4\rho^2/g^2\right)^{3/2}}\ , \label{eq:Finertzt}\\
\mathcal{F}^{\rho}_{\ t'}&=\frac{1}{\sqrt{1+\Phi^4}}\frac{8e\rho z'}{g^2\left((z'^2-t'^2-(1/g)^2+\rho^2)^2+4\rho^2/g^2\right)^{3/2}}\ ,\label{eq:Finertrhot}\\
\mathcal{F}^{z'}_{\ \rho}&=\frac{1}{\sqrt{1+\Phi^4}}\frac{8e\rho t'}{g^2\left((z'^2-t'^2-(1/g)^2+\rho^2)^2+4\rho^2/g^2\right)^{3/2}}\ .\label{eq:Finertzrho}
\end{align}
Note that the transformation given in Eq. \eqref{eq:Fmunuinert}, which yields the Faraday tensor components in the inertial coordinate system (Eqs. \eqref{eq:Finertzt}-\eqref{eq:Finertzrho}), is ultimately proportional to the derivatives of the electrostatic potential in the comoving (non-inertial) frame. Hence, in order to test our approximation for the electrostatic potential, Eq. \eqref{eq:apppot}, we compute the relative error of the norm of the electric field obtained from the exact expression, given by Eq. \eqref{eq:EfieldRin}
expressed in the inertial coordinates, 
with respect to the norm of the electric field obtained from the approximated field components (Eqs. \eqref{eq:Finertzt} and \eqref{eq:Finertrhot},
where the potential is given by 
Eq. \eqref{eq:Copsoninert}). The result {of this calculation} is displayed in 
Figure \ref{fig:relerrorEinert}, which shows that the approximation is in good agreement with the exact result.

Let us next check how the approximation for the fields works in the homogeneous equation (Eq. \eqref{eq:Bianchi}). Upon substitution of Eqs. \eqref{eq:Finertzt} - \eqref{eq:Finertzrho}, it follows that Eq. \eqref{eq:Bianchi} is satisfied to $\mathcal{O}(\rho)$ if we consider observation points near the worldline of the particle and to $\mathcal{O}(\rho^{-7})$ for points very far from the particle, where $\rho$ is the distance between the observation point and the worldline of the particle. 
Hence, in both 
the applications to be developed below, the approximation is valid. Namely, in radiation reaction (where
the fields are evaluated on the particle) and in the radiation field (where the fields are
evaluated very far from the particle), the homogeneous equation is satisfied by the approximation. 
\begin{figure}[ht]
\centering
\includegraphics[width=0.47\textwidth]{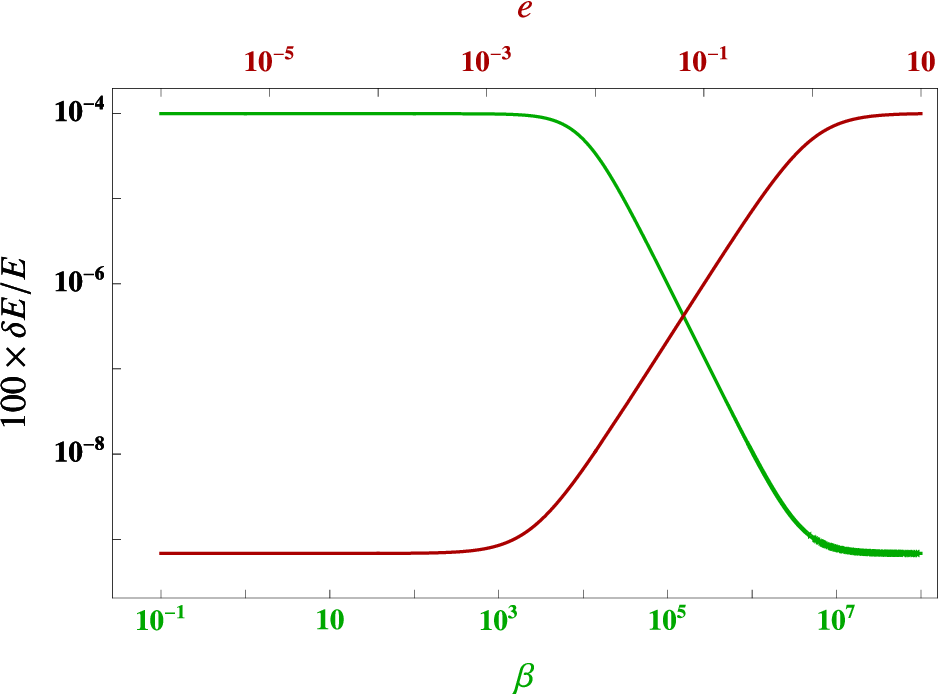}
\caption{Percent error of the approximation of the norm of the electric field obtained 
using Eq. \eqref{eq:apppot} for the potential
with respect to the exact result calculated using \eqref{eq:EfieldRin}, at the particle's position in the inertial coordinate system. The red curve represents the error for running $e$ and $\beta=10^{6}$, while the green curve represents the running $\beta$ for $e=0.01$. In both plots we have taken $g=1$, $\rho=0.001$ and $t'=0.1$.}
\label{fig:relerrorEinert}
\end{figure}

\section{Radiation and Radiation Reaction}\label{sec:Radiation}

Having at our disposal the components of the tensor $\mathcal{{F}}^{\mu'}_{\ \nu'}$, we shall analyze next the radiation field and how it acts on the charged particle. 

\subsection{Radiation}

In order to study the features of the radiation field generated by a linearly accelerated particle in BI NLED, 
the Poynting vector will be calculated next, using the field components in an inertial frame, given by Eq. \eqref{eq:Fmunuinert}.
From Eq. \eqref{eq:Tmn}, it follows that the Poynting vector for the nonlinear theory under consideration is 
\be
\mathbf{S}=\frac{1}{4\pi}\mathbf{E}\times\mathbf{H}=\frac{1}{4\pi}\frac{\mathbf{E}\times\mathbf{B}}{\sqrt{1+F/\beta^2}},
\en
where the vectors $\mathbf{E}$ and $\mathbf{B}$ in the inertial frame are displayed in Eqs. \eqref{eq:Finertzt}-\eqref{eq:Finertzrho}
We can further rewrite the electric and magnetic field components using the notation 
introduced in Sect.\ref{cotra}. 
By plugging Eq. \eqref{eq:eventOlambda} into Eqs. \eqref{eq:v0ret} and \eqref{eq:vzret}, the expression for $R_\mu v^\mu_\ret$ in Eq. \eqref{eq:66} can be rewritten as follows:
\be\label{eq:newRmuvmu}
R_\mu v^\mu_\ret=\gamma\paren{R-v_\ret R_{z'}}\ ,
\en
where $R=R_0=t'-t'_0$ 
and $R_{z'}=z'-z'_0$. 
Taking into account
Eqs. \eqref{eq:Rmu} and \eqref{eq:t0}, and the condition $R^\mu R_\mu=0$ the following identities
are obtained:
\begin{align}
z'^2-t'^2-\paren{1/g}^2-\rho^2&=\frac{2}{a_\ret}\left[\paren{1-v_\ret^2}\paren{R_{z'}-R_0v_\ret}-\rho^2 a_\ret\right]\ , \\
\rho z'&=\frac{1}{a_\ret}\left[\paren{1-v_\ret^2}\rho+R_{z'}\rho a_\ret\right]\ , \\
\rho t'&=\frac{1}{a_\ret}\left[\paren{1-v_\ret^2}\rho v_\ret+R_0\rho a_\ret\right]\ ,\\
z'^2-t'^2-\paren{1/g}^2+\rho^2&=\frac{2}{a_\ret}\paren{1-v_\ret^2}\paren{R_{z'}-R_0 v_\ret}\ .
\end{align}
Consequently, the potential and the field components in the inertial frame can be written as
\begin{align}
\phi_{(M)_C}&=ea_\ret\frac{R_{z'}-R_0v_\ret}{\paren{1-v_\ret^2}^{3/2}\paren{R_0-R_{z'}v_\ret}}\nonumber\\
&=eg\frac{R_{z'}-R_0v_\ret}{R_0-R_{z'}v_\ret}\ ,
\end{align}
\begin{align}\label{eq:Ezinert}
E^{z'}&=\mathcal{F}^{z'}_{\ t'}=e\frac{1}{\sqrt{1+\Phi^4}}\frac{\paren{1-v_\ret^2}\paren{R_{z'}-R_0v_\ret}-\rho^2a_\ret}{\paren{R_0-R_{z'}v_\ret}^3}\ , \\
 \label{eq:Erhoinert}
E^{\rho}&=\mathcal{F}^{\rho}_{\ t'}=e\frac{1}{\sqrt{1+\Phi^4}}\frac{\paren{1-v_\ret^2}\rho+R_{z'}\rho a_\ret}{\paren{R_0-R_{z'}v_\ret}^3}\ , \\
 \label{eq:Bphiinert}
B^{\phi}&=\mathcal{F}^{z'}_{\ \rho}=e\frac{1}{\sqrt{1+\Phi^4}}\frac{\paren{1-v_\ret^2}\rho v_\ret+R_0\rho a_\ret}{\paren{R_0-R_{z'}v_\ret}^3}\ .
\end{align}
These equations highly resemble the well-known expressions for the fields of a linearly accelerated charge \cite{Gupta1998}.

The total radiated power per unit solid angle at infinity will be given by\footnote{The emitted power is denoted by $P_0$ to distinguish it from the invariant $P$.}
\be \label{eq:powerstd}
\frac{\dd P_0}{\dd \Omega}=\lim_{\vert\mathbf{x}\vert\to\infty}\vert\mathbf{x}\vert^2\mathbf{S}\cdot\hat{\mathbf{x}} \ ,
\en
Note that, as in Maxwell's case (see for instance \cite{lechner2018}),
the electric and magnetic fields are composed of velocity and acceleration dominating terms. Just as in the Maxwell scenario, at large distances, the field components displayed in Eqs. \eqref{eq:Ezinert}-\eqref{eq:Bphiinert} have a dominant contribution proportional to the acceleration. Hence, the relevant contributions to be used in $\mathbf{S}$ in Eq. \eqref{eq:powerstd}, namely, the acceleration-dominated terms, are
\begin{align}
E^{z'}_\mathrm{ac}&= -\frac{e a_\ret \rho ^2}{\sqrt{1+\Phi^4} (R-R_{z'} v_\ret)^3}, \\
E^{\rho}_{\mathrm{ac}}&=\frac{e a_\ret \rho  R_{z'}}{\sqrt{1+\Phi^4} (R-R_{z'} v_\ret)^3} , \\
B^{\phi}_{\mathrm{ac}}& =\frac{e a_\ret \rho  R}{\sqrt{1+\Phi^4} (R-R_{z'} v_\ret)^3} \ ,
\end{align}
where $\Phi\equiv\phi_{(M)C}/\sqrt{e\beta}$.

The invariant $F$ is given by
\be
F=-\frac{\beta ^2 e^2 \left(1-v_\ret^2\right)^2 \left(R^2 v_\ret^2-2 R R_{z'} v_\ret+R_{z'}^2-\rho ^2 \left(v_\ret^2-1\right)\right)}{(R-R_{z'} v_\ret)^2 \left(e^2 g^4 (R_{z'}-R v_\ret)^4+\beta ^2 (R-R_{z'} v_\ret)^4\right)}.
\en
Let us remark that there is a nontrivial dependence on the acceleration $g$, both in the invariant $F$ and in the potential $\phi_{(M)C}$. 
Also, all the field components
depend nontrivially on $\sqrt{e\beta}$. This will cause the total radiated power to depend on the parameters $(g,e,\beta)$ . 

It will be convenient in what follows to use the dimensionless parameters by resorting to the characteristic nonlinearity radius $\lambda_{\text{BI}}\equiv\sqrt{e/\beta}$,
\be
\mathtt{c}=\frac{e^2g^4}{\beta^2}=\lambda_\text{BI}^4g^4\ ,\quad \texttt{b}=R^4g^4\ .
\en
Note that, since $R$ is the distance from the source to the observation point, and we are interested in the case where $R\to\infty$, the parameter $\texttt{b}$ is bound to be large. Finally, note that on both definitions of $\mathtt{c}$ and $\texttt{b}$ we have lengths, namely $\lambda_\text{BI}$ and $R$, that can eventually be compared.

Let us point out that the nonlinearity radius 
$\lambda_\text{BI}$
defines the region where departures from Maxwell's theory are present, \textit{i.e.,} regions in the nonlinear regime. Since the nonlinearity region 
increases with 
$\lambda_\text{BI}$ (which 
in turn depends of $\beta^{-1/2}$), we may encounter the hypothetical situation of $\lambda_\text{BI}\sim R$, \textit{i.e.,} the case where the nonlinearity region comprises the space up to the asymptotic region. Thus, we can introduce a parameter $\mathtt{a}$, which controls the ratio between the nonlinearity region, $\lambda_\text{BI}$, and the observation point, $R$, given by
\be
\mathtt{a}=R/\lambda_\text{BI}\ .
\en
Consequently, when $\mathtt{a}\gg1$ the observation point is far away from the nonlinearity region and we are expected to recover the results from Maxwell's theory. Finally, we can relate the three parameters $(\mathtt{a},\mathtt{b},\mathtt{c})$ in the form
\be\label{eq:relparams}
\mathtt{b}=\mathtt{a}^4\mathtt{c}\ .
\en
Hence, {by virtue of Eq. \eqref{eq:relparams}, it is possible to reduce the parametric dependence of our results to only two independent parameters. In what follows, we shall consider the parameter space generated by $\paren{\mathtt{a},\mathtt{c}}$.}

For the calculation of the angular distribution of the radiated power, the relevant quantities 
in the limit $\vert\mathbf{x}\vert\to\infty$ are needed.
From Eq. \eqref{eq:powerstd} we obtain that the power emitted per unit solid angle expressed in retarded time
is given by
\be 
\label{eq:powerBI}
\frac{\dd P_0'}{\dd\Omega}=\frac{e^2a_\ret^2\sin^2\theta}{4\pi f(\theta,v_\ret,
\mathtt{a},\mathtt{c})}\ ,
\en
where the angle between $\mathbf{x}$ and $\mathbf{R}$ was neglected by virtue of the limit, and
the function $f$
is given by
\begin{align}
f(\theta,v_\ret,
\mathtt{a},\mathtt{c})=(1-v_\ret \cos\theta)\sqrt{1-\frac{\left(1-v_\ret^2\right)^2}{\mathtt{a}^4 \left(\mathtt{c}  (v_\ret-\cos\theta)^4+(1-v_\ret \cos \theta)^4\right)}}   \nonumber\\
\times\left(\cos ^4\theta \left(\mathtt{c} +v_\ret^4\right)+\mathtt{c}  v_\ret^4-4 \cos\theta \left(\mathtt{c}  v_\ret^3+v_\ret\right)-4 v_\ret \cos ^3\theta \left(\mathtt{c} +v_\ret^2\right)+6 (\mathtt{c} +1)v_\ret^2 \cos ^2\theta +1\right)\ .
\end{align}
{This expression reduces to the one in Maxwell's theory in the limit $\mathtt{a}\to\infty$, $\mathtt{c}\to 0$}.
The nontrivial dependence of the radiated power per solid angle 
on the parameters ($e$, $\beta$, $R$, and $g$) causes departures from the standard results, as will be shown next.  
In fact, for a fixed charge, the angular distribution of radiation in BI electrodynamics is not only dependent on the velocity of the source of the radiation, but also on the value of the maximum field parameter and the acceleration.

Let us begin by inspecting the angle of maximum intensity of radiation, $\theta_{\rm max}$,  by maximizing Eq. \eqref{eq:powerBI}. Figure \ref{fig:thetamax} displays $\theta_{\rm max}$ as a function of {$\mathtt{c}$},
for several values of {$v_\ret$}, and
$\texttt{a}=10^6$, so that the Maxwell limit corresponds to small values of $\mathtt{c}$. The figure shows that for $\mathtt{c}\ll1$ (and $\texttt{a}\gg1$), Maxwell's results for the maximum angle are recovered while, as $\mathtt{c}$ increases, departures from Maxwell theory become considerable: for a given velocity, the maximum angle is larger (for larger $\mathtt{c}$) in comparison with that of Maxwell's. Also, for fixed $v_{\rm{ret}}$ and $\lambda_{\rm{BI}}$, with $\texttt{a}\gg1$, $\theta_{\rm{max}}$ grows with $g$.
\begin{figure}[ht]
\centering
\includegraphics[width=0.47\textwidth]{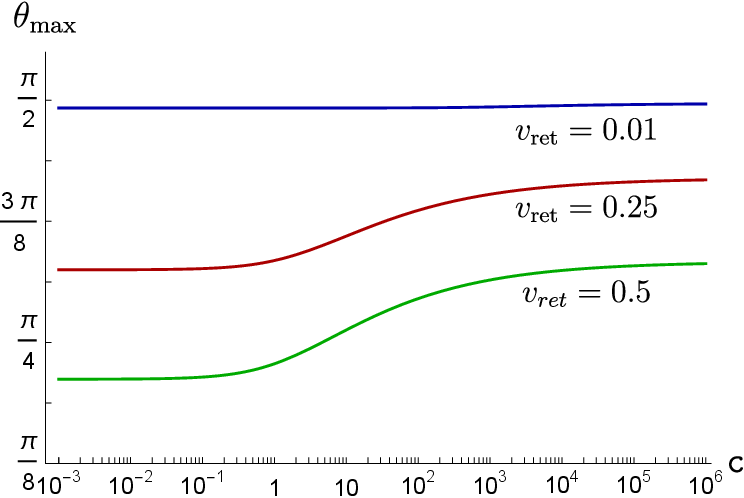}
\caption{Angle of maximum radiation intensity as a function of {$\mathtt{c}$, for $\texttt{a}=10^6$ and $v_\ret=0.01$ (blue curve), $v_\ret=0.25$ (red curve) and $v_\ret=0.5$ (green curve)}. Note that as $\mathtt{c}$ increases, \textit{i.e.}, as the departures from Maxwell's theory are present in a larger region of space, the value of $\theta_\mathrm{max}$ increases.}
\label{fig:thetamax}
\end{figure}

In order to obtain the total energy flux at infinity of the radiation emitted by the accelerating particle, Eq. \eqref{eq:powerBI} must be integrated over the solid angle. Figure \ref{fig:powerBI} displays the normalized flux at infinity as a function of {$\mathtt{c}$}, for $\texttt{a}=10^6$ and several values of $v_\ret$. The classical result (given by Larmor's formula), valid when {$v_{\ret}\ll1$} and $\lambda_{\text{BI}}\to0$, is $P_0\sim 2/3$, and is effectively recovered in the low velocity limit for $\texttt{a}\gg1$ {and small $\mathtt{c}$}. 

It also follows from Figure \ref{fig:powerBI} that when $\mathtt{c}$ increases there are departures from Maxwell's result. In particular, the total power at infinity decreases for increasing $\mathtt{c}$, up to the point that for $\mathtt{c}\gg1$, there is practically no power emitted by the source. This result, is a consequence of the theoretical exploration of the maximum field parameter $\beta$ being arbitrarily small, so that the region $\texttt{a}\ll1$, which can be interpreted as the nonlinearity regime of BI electrodynamics permeating most of space. Such a regime may be called the  deep BI regime, where the field can hardly be excited. 

Since an accelerating charge in the deep BI regime does not lose energy in form of radiation, it should  not react to its own acceleration-varying field, \textit{i.e.}, there should be no  radiation-reaction in such a regime (assuming energy conservation). We shall explore this situation next.    

\begin{figure}[ht]
\centering
\includegraphics[width=0.47\textwidth]{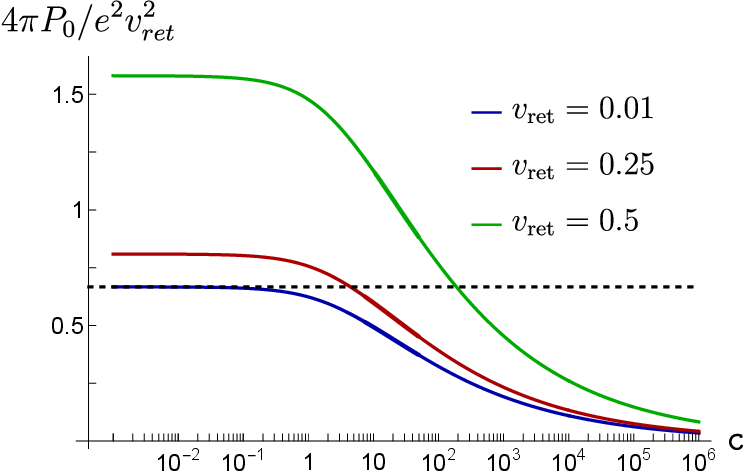}
\caption{Plot of the normalized energy flux at infinity by a BI source (Eq. \eqref{eq:powerBI}) for $\texttt{a}=10^6$ and different values of $v_\ret$. In the plot, $\mathtt{c}$ takes values in the range $[10^{-3},10^6]$. The black dashed line is fixed at $2/3$, which {corresponds to} Maxwell's result for a nonrelativistic particle. Note that for $v_\ret\ll1$ and $\mathtt{c}\ll1$ we recover the power emitted by a Maxwell source.}
\label{fig:powerBI}
\end{figure}

\subsection{Radiation Reaction} \label{sec:radreact}
\begin{figure}[ht]
\centering
\includegraphics[width=0.47\textwidth]{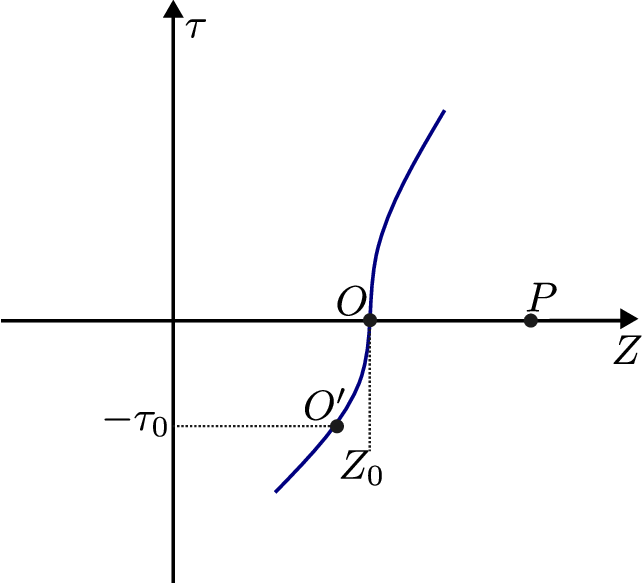}
\caption{Schematic view of the situation considered in Section \ref{sec:radreact}: the blue curve corresponds to the worldline of a BI particle moving with an arbitrary acceleration, in the event $O'$ at $-\tau_0$, such a particle emits radiation that is ultimately detected at point $P$, which is simultaneous with the event $O$. }
\label{fig:fig1}
\end{figure}

In order to examine the problem of how an accelerated BI particle reacts to its radiation field, we shall follow the approach of Gupta and Padmanabhan \cite{Gupta1998}, who start by considering two events $O$ (at $\tau=0$) and $O'$ (at $\tau=-\tau_0$) on the worldline of the accelerated particle\footnote{Let us remark that this approach is valid for a particle under an arbitrary acceleration, not necessarily constant.}. These arbitrarily chosen events represent spacetime points where the uniformly accelerated particle is instantaneously at rest and a point over the past worldline, respectively. {Figure \ref{fig:fig1} schematically represents this situation {in the instantaneously-at-rest coordinates $(\tau,Z)$}. We shall return to it below.} Although the reference frame $S'$, introduced in Sec. \ref{sec:inertcomp}, is comoving with the accelerating particle (hence it is a non-inertial frame), 
in a coordinate system at rest at any event (in particular at $O$), the particle is instantaneously at rest. Furthermore, at $O$, it follows from the $\rho\to0$ limit in
Eq. \eqref{eq:Erho} (\textit{i.e.}, when taking points over the particle's worldline), $E^{\rho}$ vanishes identically, while $B^\phi$ is trivially zero because the field is evaluated at the particle's rest frame. 

In order to write the near-horizon line element in a functional form 
analog to that of Rindler's spacetime, 
the coordinate transformation $\lambda=\kappa_s Z^2/2$ can be performed in Eq. \eqref{eq:dslambda}. In the $\paren{t,Z,\rho,\varphi}$ coordinate system the line element is
\be \label{eq:dsapproxrho}
\dd s^2=\kappa_s^2Z^2\dd t^2-\dd Z^2-\dd\rho^2-\rho^2\dd\varphi^2\ ,
\en
and Copson's potential reads
\be\label{eq:VCZ}
\phi_{M(C)}\paren{Z}=\frac{e}{2r_s}\frac{Z^2+\rho^2+\paren{1/\kappa_s}^2}{\sqrt{\paren{Z^2+\rho^2-\paren{1/\kappa_s}^2}^2+4\rho^2\paren{1/\kappa_s}^2}}\ .
\en
The expression for the $Z$ component {of the electric field} is
 \be
E^{Z}=-\frac{1}{gZ}\frac{\p_Z \phi_{M(C)}\paren{Z}}{\sqrt{1+\vert\boldsymbol{\nabla} \phi_{M(C)}\paren{Z}\vert^2\beta^{-2}}}\ ,
\en
where {$\phi_{M(C)}\paren{Z}$} is
given by Eq. \eqref{eq:VCZ}. Note that the invariant $P$ is not present in this equation, since we are only considering the limit on the particle's position. Explicitly, as $\rho\to0$, we get
\be \label{eq:EZapp1}
E^{Z}=\frac{4 e  }{g^2\left(Z^2-Z_0^2\right)^2 \sqrt{\frac{16 e^2 Z^2 }{g^2\beta ^2 \left(Z^2-Z_0^2\right)^4}+1}}\ ,
\en
where $Z_0$ is the position of the particle. As expected, {in the case of} Maxwell electrodynamics namely, in the $\beta\to\infty$ limit, in which the potential is $\phi_{M(C)}(Z)$, there is a divergence on the particle's position which scales as {$Z^{-2}\sim\lambda^{-1}$}. For any finite value of $\beta$,  Eq. \eqref{eq:EZapp1} is regular when $Z=Z_0$

Let us analyze how the motion of the particle affects its neighborhood by considering {the situation schematically presented in Figure \ref{fig:fig1}.} The goal is to obtain the field at point $P$, which is simultaneous with $O$ and 
on the future light cone of $O'$. In order to consider an infinitesimal neighborhood around $O$, we take $P\to O$ and $O'\to O$. In the limit $O'\to O$, the particle is instantaneously at rest, and the electric field is given by Eq. \eqref{eq:EZapp1}. 
By setting $\rho=0$ in \eqref{eq:75} and expressing the result in $(t,Z,\rho,\phi)$
coordinates we find that the relation
\be\label{eq:idzvmu}
Z^2-Z_0^2=\frac{2}{g}R_\mu v_{\ret}^{\mu}\ ,
\en
holds. Hence, $E^Z$ in Eq. \eqref{eq:EZapp1} can be written in terms of $R_\mu v^\mu_\ret$. 

When taking the limit $O'\to O$, the dominant contribution is that of the static field -- in this case, the field of a BI point particle -- and radiation reaction arises when considering the particle as non-static. {Hence, the motion of the particle at $O'$ must be taken into account, by expanding  $R_\mu v^\mu_\ret$
as follows:}
\begin{align} 
&R_\mu v_{\ret}^{\mu}\vert_{O'}=R_\mu v_{\ret}^\mu\vert_O+\p_Z\left(R_\mu v_{\ret}^\mu\right)\paren{Z-Z_0} ,
\end{align}
where the leading order term is given by Eq. \eqref{eq:idzvmu}. 
It follows that
\begin{align} 
\label{eq:Rmuvmu}
&R_\mu v_{\ret}^\mu=\frac{g}{2}\left(Z^2-Z_0^2(-\tau_0)\right)-\left(Z-Z_0(-\tau_0)\right)u^{Z}\paren{-
\tau_0}\ ,
\end{align}
where $u^Z\paren{-\tau_0}$ is the velocity of the particle at $O'$. Hence, the new expression for the $E^Z$ component of the field 
follows from replacing Eqs.  \eqref{eq:idzvmu} and \eqref{eq:Rmuvmu} into Eq. \eqref{eq:EZapp1}.

Next{, in the instantaneously-at-rest frame}, at the event $O$, the particle has zero velocity and acceleration, but the derivative of the acceleration at such a point is nonzero. We can expand the position $Z_0\paren{-\tau_0}$ for $O\to O'$ as
\be
Z_0(-\tau_0)\approx Z_0-\frac{1}{6}\dot{\alpha}\tau_0^3\ ,
\en
where $Z_0=1/g$ and  $\dot{\alpha}\equiv\dd^3 Z_0/\dd\tau^3$ is the derivative of the proper acceleration, which is zero for the case of hyperbolic motion. Similarly, we obtain for the velocity
\be
u^Z\paren{-\tau_0}\approx\frac{1}{2}\tau_0^2\dot{\alpha}\ .
\en
Taking into account that the observation point $P$ is very close to the particle's position, we can write $Z=Z_0+\delta$ with $\delta/Z_0\ll1$. Finally we arrive to an expression for the electric field that depends of two parameters, $\tau_0$ and $\delta$. It follows from Eqs. \eqref{eq:74} and \eqref{eq:75}  that, by setting $\tau=\rho=0$ for small $\tau_0$ we obtain 
\be
\delta\sim\tau_0\ .
\en
{Using these relations in the expression for $E^Z$}, the force on the particle exerted by its own field, $F^Z=eE^Z(P)$, can be expanded as $\delta\to0$, for fixed beta, yielding
\begin{align}\label{eq:ForceFiniteBeta}
F^Z\approx& e\beta-e g\beta\delta+eg^2\beta\delta^2-eg^3\beta\delta^3+\paren{eg^4\beta-\frac{\beta^3}{2e}}\delta^4+\paren{-eg^5\beta+\frac{g\beta^3}{2e}}\delta^5+\paren{eg^6\beta-\left[\frac{3}{4}g^2-\frac{2}{3}\dot{\alpha}\right]\frac{\beta^3}{e}}\delta^6+\mathcal{O}\paren{\delta^7}\ .
\end{align}
The series has been explicitly calculated 
up to the term where the derivative of the proper acceleration, $\dot{\alpha}$, appears.

Let us now discuss the result \eqref{eq:ForceFiniteBeta}. Since BI electrodynamics was specifically engineered to furnish regular values for both the electric field and the energy of a point particle, it is expected that the force exerted by the particle's field on the particle should also be regular. Eq. \eqref{eq:ForceFiniteBeta} shows that this is indeed the case. The expansion in 
Eq. \eqref{eq:ForceFiniteBeta} goes up to $\mathcal{O}(\delta^6)$ in order to show at what order of the expansion
the motion of the particle starts to influence the force. {Interestingly,} the derivative of the proper acceleration appears only {at} order 6, while in ME it appears at order zero, that is, as a finite contribution to the force (see \ref{app:Gupta}). Hence
its effect is negligible in the case under study, for any finite $\beta$. Also, it is interesting to note that the numerical coefficient appearing in the {term carrying $\dot{\alpha}$} is precisely the same that appears in Maxwell's result  (see Eq. \eqref{eq:radreact}). 

We shall show next how the result for the 
radiation-reaction force in Maxwell's theory, \textit{i.e.}, two divergent terms and some constant terms follows from Eq. \eqref{eq:ForceFiniteBeta}. It is easily seen that taking the limit in which the region of nonlinearity goes to zero (namely $\beta\to\infty$) does not yield the correct Maxwellian limit. It also follows that 
the imposition of the equality $\beta=e/\delta^2$ leads to the Maxwellian expression for the force from Eq. \eqref{eq:ForceFiniteBeta}. However, it is important to point out that the development in Eq. \eqref{eq:ForceFiniteBeta} goes up to $\mathcal{O}(\delta^6)$. Hence, by assuming that $\beta\propto\delta^{-2}$, we may be neglecting higher order combinations of the type $\beta^n\delta^{2n}$, with $n\in\mathbb{N}$, that contribute non-trivially to the regular terms in the force. Therefore, in order to obtain Maxwell's result for the force, we need to independently consider the limits $\lambda_{\text{BI}}\to0$ and $\delta\to0$, thus obtaining the result 
\be
F^{Z}\approx\frac{e^2}{\delta ^2}-\frac{e^2 g}{\delta }+\frac{1}{12} e^2 \left(8 \dot{\alpha} +9 g^2\right)+\mathcal{O}(\delta,\lambda_{\text{BI}}^4)\ ,
\en
\textit{i.e.}, the exact form of the radiation-reaction force in Maxwell electrodynamics {c.f. Eq. \eqref{eq:radreact}}, in the instantaneously-at-rest frame\footnote{See Eq. (D.6) for the expression in the inertial coordinates.}
.
\begin{figure}[ht]
\centering
\includegraphics[width=0.42\textwidth]{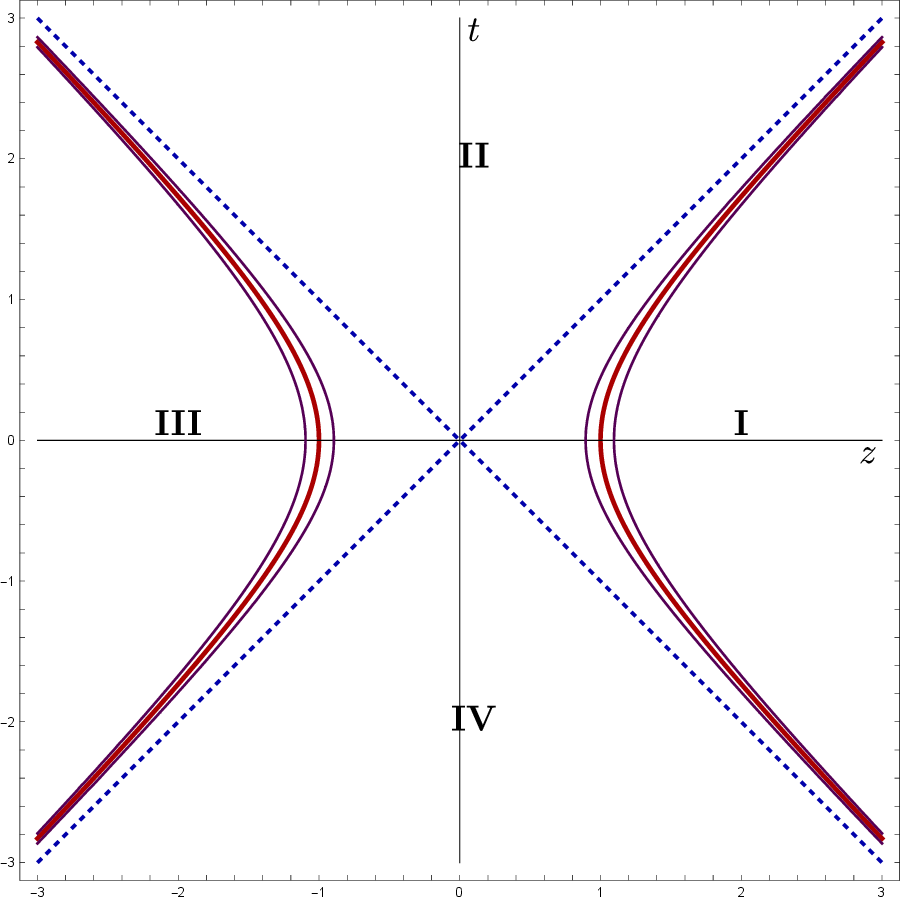}
\caption{The nonlinearity region,  modelled as $\vert\nabla \phi_{M(C)}\vert\approx\beta$, corresponds to the purple lines in the above plot for $\beta=0.5$ and $e=0.01$. The red line corresponds to the world line of an accelerated particle.}
\label{fig:BI_reg_Rindler}
\end{figure}

The nonlinearity region, defined by $\lambda_{\text{BI}}$, where  the electric field is saturated, can be defined by $\vert\nabla \phi_{M(C)}\vert=\beta$ (see Eq. \eqref{eq:normEBI}) 
as displayed in Fig. \ref{fig:BI_reg_Rindler}. Consequently, in order to account for the nonlinearity contributions 
to the force we can consider three different situations, namely \textit{(i)}  $\lambda_{\text{BI}}<\delta$, \textit{i.e.}, the observation point at $P$ is outside the nonlinearity region, \textit{(ii)} $\lambda_{\text{BI}}\sim\delta$, the observation point is located at the limit of the nonlinearity region and \textit{(iii)} $\lambda_{\text{BI}}>\delta$, the observation point is enclosed by the nonlinearity region. In particular, we may analytically take the limit $\delta\sim \lambda_{\text{BI}}$ in Eq. \eqref{eq:ForceFiniteBeta},  
obtaining, at leading order,
\be \label{eq:FZr0delta}
F^{Z}=\frac{e^2}{\sqrt{2} \delta ^2}-\frac{e^2 g}{\sqrt{2} \delta }+\frac{e^2 \left(8 \dot{\alpha} +21 g^2\right)}{24 \sqrt{2}}+\mathcal{O}(\delta)\ ,
\en
a result which displays the same behavior of Maxwell's result for  both the divergent and regular terms, aside {from} numerical factors. 

Note that the divergent behavior present in Eq. \eqref{eq:FZr0delta} is a consequence of the fact
that we are taking the nonlinearity radius as vanishing with $\delta$.

Let us stress that, just as in the Maxwell case, terms in
Eq. \eqref{eq:ForceFiniteBeta}
that are characteristic of the electric field on the particle itself, \textit{i.e.}, terms that arise even when considering the static point particle, 
are present 
in the BI case. Therefore, the interesting terms in the expression of the force are those that arise exclusively from the motion of the particle, namely, those that include the derivative of the proper acceleration and are ultimately related to the radiation-reaction force\footnote{{The appearance of the derivative of the proper acceleration, coming from the consideration of the motion of the particle at $O'$ is related to the fact that radiation-reaction leads to a deviation in the particle's trajectory. In fact, this effect has a purely geometrical origin (see \cite{Gupta1998}).}}. Then, it is suitable to consider only such terms by subtracting the static part of the force, \textit{i.e.}, by taking
\be\label{eq:FZregular}
F^Z(\dot{\alpha})=F^Z(P)-F^Z(O)\ ,
\en
where $F^Z(P)$ is the force calculated at point $P$ as $P\to O$ as previously stated, and $F^Z(O)$ is the force at event $O$ (at the origin of the comoving system). 
Figure \ref{fig:FZreg} displays the dependence of $F^Z$ with $\dot{\alpha}$ for the situations \textit{(i)}, \textit{(ii)}, and 
\textit{(iii)} described above, for fixed $\delta$. In all cases, $F^Z$ depends linearly of  $\dot{\alpha}$, and the radiation-reaction term attains a maximum when $\lambda_{\text{BI}}=\delta=0$, i.e., in Maxwell's electrodynamics. In any other case, the numerical factors multiplying the radiation reaction term are smaller than those in Maxwell's electrodynamics. \\
It is important to stress that if the observation point lies outside the nonlinearity region
($\delta>\lambda_{\text{BI}}$), then the radiation-reaction force is stronger than in the case 
$\delta<\lambda_{\text{BI}}$. This is related to the fact that the nonlinearity region is characterized as the region where the field takes its maximum possible value, $\beta$. Hence, it is impossible to increase the force in this region further. The nonlinearity region acts like a screening mechanism for the particle's electric field. Consequently, within this region, it is impossible for radiation-reaction to arise. Another interesting situation comes from considering the observation point outside the nonlinearity region; Maxwell's result is adequately recovered if and only if this region  shrinks into a point. Otherwise, the radiation-reaction term is always less than in Maxwell's electrodynamics. 
\begin{figure}[ht]
\centering
\includegraphics[width=0.47\textwidth]{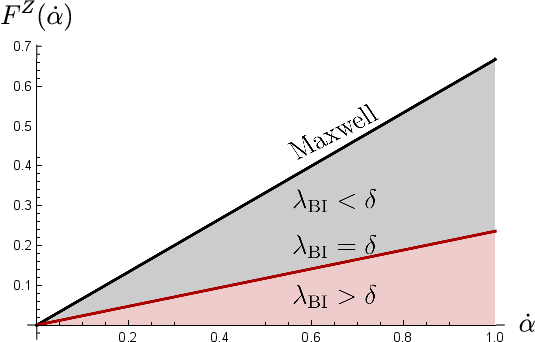}
\caption{
The plot shows the part of $F^Z$ 
that depends of the motion of the particle
as a function of the
derivative of the proper acceleration. The black line corresponds to Maxwell's result, while the red line corresponds to the case where $\lambda_{\text{BI}}=\delta$. As the nonlinearity region increases, $F^Z$ decreases up to the point where it is negligible. We have chosen $e=1$, $g=1$, $\delta=10^{-4}$. }
\label{fig:FZreg}
\end{figure}

Similarly, let us we consider the situation where the nonlinearity region is arbitrarily large. It can be seen directly from Eq. \eqref{eq:ForceFiniteBeta} that the radiation-reaction term vanishes since $\delta$ is assumed to be a very small parameter close to the position of the classical charge. This should eventually have consequences related to energy conservation since radiation-reaction for Maxwell's theory arises as a manifestation that a part of the particle's electric field is responsible for a non-zero power at large distances from the source, implying that this energy loss must be compensated in regions close to the particle, which coincides with the {results obtained in the previous section}.

\section{Conclusions}\label{sec:Conclusions}

Physically meaningful solutions of nonlinear electromagnetic theories are hard to obtain, even in static cases with a high degree of symmetry. The addition of charges in motion makes matters worse. In BI electrodynamics, the dynamics of a charged particle is usually analyzed using a perturbative approach \cite{Chruscinski1998,Chruscinski1998a}. In the present work, we have analyzed the behavior of a linearly accelerating charged BI particle using the electrostatic potential of a BI charged particle in the vicinity of a Schwarzschild black hole \cite{Falciano2019}. By exploiting the fact that the spacetime arbitrarily close to the horizon of a Schwarzschild black hole is reduced to Rindler's spacetime, and using the theorem presented in Sec. \ref{ssec:BIpart}, we calculated the electrostatic potential and fields in the comoving system. These quantities were then transformed to the inertial frame. The electromagnetic field components for such a particle were also calculated.

Armed with these results, we have discussed the influence of the nonlinearities on the radiation. In doing so, we have explored the theoretical situation of $\beta$ being a constant whose value determines only the region where the nonlinearities of the theory become important (\textit{i.e.}, where Maxwell's results do not hold) and, as such, has not a fixed value. In this context, we have shown that both the radiation field and the radiation-reaction force are influenced by $\beta$.
In fact, for small values of this parameter, both  radiation and radiation-reaction for the accelerated particle are less important than those in Maxwell's case, while 
the nonlinearities are still relevant for physically viable values for the $\beta$ parameter (\textit{i.e.}, when $\beta\gg1$).
The effects of the influence of nonlinearity on radiation and radiation-reaction are discussed in Sec. \ref{sec:Radiation}, and are synthesized in Figures \ref{fig:thetamax}, \ref{fig:powerBI} and \ref{fig:FZreg}. These features can be traced back to the existence of the nonlinearity region defined by $\lambda_{\text{BI}}=\sqrt{e/\beta}$, which offers a screening mechanism for the electromagnetic field. It becomes increasingly easier to saturate the electric field within this region. Therefore, in the case where the nonlinear regime dominates in a very large region, the accelerated BI particle emits less radiation than in the opposite case, which in turn results in the weaker response of the particle to this radiation. We also showed how the $\beta$ parameter of Born-Infeld electrodynamics influences both the angle of maximum radiation intensity $\theta_{\text{max}}$ and the total radiated power for the uniformly accelerated charge. In particular, the more extended the departure from Maxwell’s theory is in space, the larger the value of $\theta_{\rm{max}}$. 

Let us remark that an important part of the present work resides in showing that, despite the general impossibility of having a well-defined electrostatic potential for NLED, it is still possible to derive approximated results for this potential, which lead to the electric field of an accelerating particle at a  fair level of approximation, specially in the neighbourhood of the particle. Also, we must stress that, contrary to what was previously obtained in the literature for ME, our result for BI furnishes only regular contributions to the particle's force, as expected, and the Maxwellian radiation-reaction force is recovered in the proper limit. This last statement is a consequence of the fact that the expression of the Lorentz force was used in the calculations. Additional research using alternative, well-motivated definitions of force (c.f. \cite{Rohrlich2008}) may is also of interest. 

Let us also mention that we presented evidence that indicates that the electric field of a Born-Infeld particle vanishes on the horizon of a Schwarzschild black hole (a result that was known to be valid for Maxwell's theory, in which the horizon acts like a conducting surface). An immediate consequence of this fact is that the black hole polarization phenomenon is also present within BI electrodynamics. Whether this is also true for other nonlinear electromagnetic theories remains to be seen. 

Finally, let us stress that the method introduced in this work can, in principle, be applied to any electromagnetic theory, thus paving the way to the study of the radiation field and radiation-reaction in other nonlinear theories of interest. We hope to return to this problem, as well as to that of the alternative definitions of force, in future publications.

\section*{Acknowledgements}
The authors thank the anonymous referee for useful suggestions towards the improvement of a previous version of the manuscript. This work was supported by National Council for Scientific and Technological Development - CNPq and FAPERJ - Fundaçāo Carlos Chagas Filho de Amparo à Pesquisa do Estado do Rio de Janeiro, Processo SEI 260003/014960/2023.

\appendix

\section{Copson and Linet's potential in Rindler coordinates} \label{app:PotRind}

The Schwarzschild metric in {standard} coordinates is given by 
\be \label{eq:SmetricA}
\dd s^2=\left(1-\frac{r_s}{r}\right)\dd t^2-\frac{\dd r^2}{\left(1-\frac{r_s}{r}\right)}-r^2\dd\Omega^2\ ,
\en
where $r_s=2M$ is the horizon radius. To obtain the near-horizon metric, we can expand the radial coordinate as $r\to r_s\paren{1+\epsilon\tilde{r}/r_s}$ with $\epsilon\ll1$. Retaining only the first order in $\epsilon$, we obtain
\be \label{eq:1stcoord}
\dd s^2=\epsilon\frac{\tilde{r}}{r_s}\dd t^2-\epsilon\frac{r_s}{\tilde{r}}\dd\tilde{r}^2-r^2\dd\Omega^2
\en
Considering  
$r\approx r_s,\theta\approx0$, we have that
\begin{eqnarray*}
\rho&=&r\sin\theta \\ 
&=&r_s\paren{1+\epsilon\frac{\tilde{r}}{r_s}}\sin\theta \\
&=&r_s\theta+\mathcal{O}\paren{\theta^2,\epsilon^2,\epsilon\theta}\ ,\\
\Rightarrow \dd \rho&=&r_s\dd\theta\ .
\end{eqnarray*}
Then, the line element reads
\be
\dd s^2=\epsilon\frac{\tilde{r}}{r_s}\dd t^2-\epsilon\frac{r_s}{\tilde{r}}\dd\tilde{r}^2-\dd\rho^2-\rho^2\dd\varphi^2\ .
\en
Using the surface gravity of a Schwarzschild black hole, $\kappa_s=1/2r_s$, and defining 
$\lambda=\epsilon \rtil$, the metric takes the form
\be \label{eq:apdslambda}
\dd s^2=2\kappa_s\lambda\dd t^2-\frac{1}{2\kappa_s\lambda}\dd\lambda^2-\dd\rho^2-\rho^2\dd\varphi^2\ ,
\en
which is extensively used throughout the paper. In these coordinates,  the metric reduces to that of Minkowski spacetime when $\lambda=\lambda_0=1/2\kappa_s=r_s$. Then, in this coordinate system the particle is at rest when $\lambda=\lambda_0$.

We can also use the coordinate system $\paren{t,Z,\rho,\varphi}$ 
defined by the transformation $\lambda=\kappa_s Z^2/2$, thus obtaining 
\be \label{eq:dsapproxrho1}
\dd s^2=\kappa_s^2Z^2\dd t^2-\dd Z^2-\dd\rho^2-\rho^2\dd\varphi^2\ .
\en
where, since $\kappa_s$ has dimensions of 1/length, the quantity $\kappa_s^2Z^2$ is dimensionless and in these coordinates the particle is located at $Z_0=2r_s$. By shifting the $Z$ coordinate as $\tilde{Z}=Z-1/\kappa_s$, 
it follows that
\be \label{eq:dsapproxZ}
ds^2=\paren{1+\kappa_s\tilde{Z}}^2\dd t^2-\dd\tilde{Z}^2-\dd\rho^2-\rho^2\dd\varphi^2\ .
\en
It is particularly convenient to use either Eq. \eqref{eq:dsapproxrho1} or Eq. \eqref{eq:dsapproxZ} because the spatial part of the line element can be mapped directly into usual Cartesian coordinates.

We now proceed to calculate Copson and Linet's potential in near-horizon coordinates. In standard coordinates,
Copson's potential for a particle located at a distance $a$ from the origin is given by
\be
\phi_{M(C)}=\frac{e}{2ar}\frac{\paren{2r-r_s}\paren{2a-r_s}-r_s^2\cos\theta}{\left[\paren{2r-r_s}^2+\paren{2a-r_s}^2-2\paren{2a-r_s}\paren{2r-r_s}\cos\theta-r_s^2\sin^2\theta\right]^{1/2}}\ .
\en
By performing the coordinate transformation displayed in Eq. \eqref{eq:1stcoord} we obtain
\be \label{eq:apVcepsilon}
\phi_{M(C)}\paren{\rtil}
=\frac{e}{2r_s}\frac{\epsilon\atil+\epsilon\rtil+\rho^2\kappa_s/2}{\sqrt{\paren{\epsilon\left[\rtil+\atil\right]+\kappa_s\rho^2/2}^2-2\epsilon\kappa_s\rho^2\atil}}\ .
\en
Let us now transform the potential in Eq. \eqref{eq:apVcepsilon} to the other coordinate systems presented in this section. In the $(t,\lambda,\rho,\varphi)$ coordinate system {we have} 
\be \label{eq:ApVc}
\phi_{M(C)}\paren{\lambda}=\frac{e}{2r_s}\frac{\lambda+\lambda_0+\rho^2\kappa_s/2}{\sqrt{\left[\lambda-\lambda_0+\rho^2\kappa_s/2\right]^2+2\kappa_s\lambda_0\rho^2}}
\en
Transforming to the $(t,Z,\rho,\varphi)$ coordinate system, 
\be \label{eq:ApA0}
\phi_{M(C)}\paren{Z}=\frac{e}{2r_s}\frac{Z^2+\rho^2+\paren{1/\kappa_s}^2}{\sqrt{\paren{Z^2+\rho^2-\paren{1/\kappa_s}^2}^2+4\rho^2\paren{1/\kappa_s}^2}}\ .
\en
Defining $\omega\equiv\sqrt{\left[\lambda-\lambda_0+\rho^2\kappa_s/2\right]^2+2\kappa_s\lambda_0\rho^2}$,
the derivatives of the potential defined in Eq. \eqref{eq:ApVc} read
\begin{align}
\p_\lambda \phi_{M(C)}\paren{\lambda}&=-\frac{e \lambda_0 }{r_s\omega^3}\left[\lambda-\lambda_0-\rho^2\kappa_s/2\right]\ , \\
\p_\rho \phi_{M(C)}\paren{\lambda}&=-2\frac{e}{r_s}\frac{\lambda\lambda_0\rho\kappa_s}{\omega^3}\ .
\end{align}
Hence, the norm of the electric field is 
\be
\vert \boldsymbol{\nabla} \phi_{M(C)}\paren{\lambda}\vert^2=\frac{e^2}{\omega^6}\frac{\lambda}{\lambda_0}\left\{\left[\lambda-\lambda_0-\rho^2\kappa_s/2\right]^2+2\lambda\rho^2\kappa_s\right\}\ .
\en
Correspondingly, in the $(t,Z,\rho,\varphi)$ coordinate system, 
\begin{align}
\p_Z \phi_{M(C)}\paren{Z}&=-2\frac{eZ\paren{1/\kappa_s}^2}{r_s\omega^3}\paren{Z^2-\rho^2-\paren{1/\kappa_s}^2}\ , \\
\p_\rho \phi_{M(C)}\paren{Z}&=-4\frac{e}{r_s}\frac{\rho Z^2}{\omega^3}\paren{1/\kappa_s}^2\ ,
\end{align}
and the 
components of the 
electric field in this coordinate system read
\begin{align}
E^{Z}&=4e\paren{\frac{1}{g}}^2\frac{Z^2-\rho^2-\paren{1/\kappa_s}^2}{\omega^3}\ , \\
E^{\rho}&=\frac{8e\rho Z}{\omega^3}\paren{\frac{1}{g}}^2\ .
\end{align}

\section{Electrostatic potential of a particle near the horizon of a Schwarzschild black hole}\label{app:BISbh}

In the case in which the BI particle is arbitrarily close to the horizon of a Schwarzschild black hole, the metric reduces to the form displayed in Eq. \eqref{eq:apdslambda} and the electrostatic potential, given by Eq. \eqref{eq:ApVc}, is a function of the coordinates $(\lambda,\rho)$. Expanding $\phi_{M(C)}$ 
about the position of the particle, \textit{i.e.}, taking $\lambda\to\lambda_0(1+\varepsilon)$ and $\rho\to\delta$, with $\varepsilon\ll1$ and $\delta/\rho\ll1$, we obtain, at leading order 
\be\label{eq:apCopsonapprox}
\phi_{M(C)}\approx\frac{e}{\lambda_0\varepsilon}+\mathcal{O}\paren{\varepsilon^2,\delta^2}\ ,
\en
while the derivatives take the form
\begin{align}
\p_\lambda \phi_{M(C)}&\approx-\frac{e}{\lambda_0^2 \epsilon ^2}+\mathcal{O}\paren{\epsilon^2,\delta^2}\ , \\
\p_\rho \phi_{M(C)}&\approx\mathcal{O}(\delta)\ , \\
\Rightarrow\vert\boldsymbol{\nabla}\phi_{M(C)}\vert&\approx\frac{e}{\lambda_0^2\epsilon^2}+\mathcal{O}\paren{\epsilon^2,\delta^2}\ . \label{eq:apnomVc}
\end{align}

Note that the potential $\phi_{M(C)}$ in Eq. \eqref{eq:apCopsonapprox} represents the electrostatic potential for a charged particle without accounting for the effects of the presence of the black hole which 
would be represented by black hole polarization. In other words, if we were to take $\phi_{M(C)}$ as the complete solution for the potential then the physical situation represented will be that of a uniformly accelerated particle in Rindler spacetime, which is precisely the focus of the present work. Therefore, the electric field can be expressed in terms of the gradient of $\phi_{M(C)}$ as
\be 
\label{eq:apEfieldVc}
\mathbf{E}\paren{\lambda,\rho}=-\boldsymbol{\nabla}\phi=-\frac{\boldsymbol{\nabla}
\phi_{M(C)}}{\sqrt{1+\vert \boldsymbol{\nabla}\phi_{M(C)}\vert^2\beta^{-2}}}\ .
\en
It remains to integrate Eq. \eqref{eq:apEfieldVc} in order to obtain the electrostatic potential $\phi$. Using Eq. \eqref{eq:apnomVc}, we can approximate
\be
\vert\boldsymbol{\nabla}\phi_{M(C)}\vert^2\approx\beta^2\paren{\frac{\phi_{M(C)}}{\Delta}}^4\ ,
\en
where $\Delta\equiv\sqrt{e\beta}$. Then, Eq. \eqref{eq:apEfieldVc} reads
\be
\boldsymbol{\nabla}\phi\approx\Delta\frac{\boldsymbol{\nabla}\paren{\phi_{M(C)}/\Delta}}{\sqrt{1+\paren{\phi_{M(C)}/\Delta}^4}}\ ,
\en
which can be directly integrated to obtain
\be\label{eq:apphi1}
\phi=-\frac{\Delta^2}{\phi_{M(C)}}\ _2F_1\left[\frac{1}{2},\frac{1}{4},\frac{5}{4};-\paren{\frac{\Delta}{\phi_{M(C)}}}^4\right]+C_{\infty}\ .
\en
The integration constant  $C_\infty$ can be obtained by demanding that far from the particle, the potential $\phi$ coincides with Copson's potential $\phi_{M(C)}$. By taking the limit of Eq. \eqref{eq:apphi1} for $\lambda\to\infty$ we obtain
\be
\lim_{\lambda\to\infty}\phi=C_\infty-\frac{\Delta\Gamma\paren{\frac{1}{4}}^2}{4\sqrt{\pi}}+\phi_{M(C)}+\mathcal{O}\paren{\phi_{M(C)}^5}\ .
\en
Hence, $C_\infty=\Delta\Gamma\paren{\frac{1}{4}}^2/4\sqrt{\pi}$ and the approximated electrostatic potential reads
\be
\phi=\Delta\paren{\frac{\Gamma\paren{\frac{1}{4}}^2}{4\sqrt{\pi}}-\frac{\Delta}{\phi_{M(C)}}\ _2F_1\left[\frac{1}{2},\frac{1}{4},\frac{5}{4};-\paren{\frac{\Delta}{\phi_{M(C)}}}^4\right]}\ .
\en

\section{Li\'{e}nard-Wiechert potential and field} \label{app:LWpot}

In order to set up our notation, let us review the derivation of the Liénard-Wiechert (LW) potential. This topic is often addressed in standard textbooks (c.f. \cite{lechner2018,Kosyakov2007}). The LW potential for a particle with charge $e$ is given by
\be
A^{\mu}=e\frac{u^\mu}{(R^\lambda u_\lambda)}\ ,
\en
where $R^\mu=x^\mu-z^\mu(s)$ is the difference between the observation point $x^\mu$ and the point where the particle is located along the worldline $z^\mu(s)$, and $u^\mu$ is the particle's four-velocity.
Note that the relation $R_\lambda R^\lambda=0$ holds, and that the requirement of considering the retarded Green's function can be recast as $R^0>0$. The Faraday tensor for the {LW} potential (often referred to as the LW field) is given by
\be \label{eq:FmunuLW}
F^{\mu\nu}=\frac{e}{\rho^3}\left[\left(R^\mu u^\nu-R^\nu u^\mu\right)\left(1-a^\lambda R_\lambda\right)+\rho\left(R^\mu a^\nu-R^\nu a^\mu\right)\right]\ , 
\en
where $\rho\equiv R_\lambda u^\lambda$, {and $a^\mu$ is the particle's four-acceleration}. From 
Eq. \eqref{eq:FmunuLW}
the invariants $F$ and $G$
can be directly calculated, yielding
\be
F=-\frac{e^2}{\rho^4}\ ; \qquad G=0\ .
\en
The expression for the {LW} field depends of the particle's velocity and acceleration. Hence, 
$F^{\mu\nu}$ can be rewritten as
\be
F^{\mu\nu}=F^{\mu\nu}_v+F^{\mu\nu}_{\mathrm{ac}}\ ,
\en
where
\begin{eqnarray}
F^{\mu\nu}_v&=&\frac{e}{\rho^3}\left(R^\mu u^\nu-R^\nu u^\mu\right)\ , \\
F^{\mu\nu}_\mathrm{ac}&=&\frac{e}{\rho^3}\left[\rho\left(R^\mu a^\nu-R^\nu a^\mu\right)-R^\lambda a_\lambda\left(R^\mu u^\nu-R^\nu u^\mu\right)\right]
\end{eqnarray}
respectively. It is straightforward to show that the terms involving the acceleration of the particle do not contribute to the invariant $F$. The fact that the  acceleration of the particle -- despite being nontrivial -- does not enter in the expression for the invariant $F$ is closely related to the fact that $F$ is a Lorentz invariant and, since inertial observers must agree in the value of such an invariant, it should not depend on the acceleration of the particle. Otherwise, there may exist inertial observers detecting (or failing to detect) phenomena such as radiation from the accelerated particle which, in fact, leads to a series of apparent paradoxes that have attracted attention over the years \cite{Boulware1980,Leaute1983}.

\section{Gupta \& Padmanabhan calculation of the radiation-reaction force.}\label{app:Gupta}

We reproduce here some important results of Gupta \& Padmanabhan's paper regarding the calculation of the radiation reaction force 
in Maxwell's theory \cite{Gupta1998}.  First, in the limit over the particle's position the electric field components behave as (here we are using the $\paren{t,Z,\rho,\phi}$ coordinate system)
\begin{align}
E^Z&\approx\frac{4e}{g^2}\frac{1}{\paren{Z^2-Z_0^2}^2}\ , \\
E^\rho&=0\ .
\end{align}
Clearly, as we approach the particle's position the $E^Z$ component diverges. Moreover, we can express the latter result as 
\be \label{eq:OEField}
E^Z=\frac{e}{\paren{R_\mu v^\mu_{\ret}}^2}\ .
\en
Since we are interested in the case of a uniformly accelerated charge, we need to take into account two events, namely $O$ and $O'$, where the particle is instantaneously at rest (i.e., both the velocity and acceleration at this point are zero) and where the particle is accelerated . Furthermore, we take an observation point $P$ which is simultaneous with $O$. The light cone connecting $O'$ with $P$
is not necessarily a straight line, however since we need to consider points very close to the event $O$, the actual shape of this line is not relevant. 

As $O'\to O$ the behavior of the field will be mainly dominated by the Couloumbian part. At $O$ the field is static and given by Eq. \eqref{eq:OEField}. Since we are considering not points at $O$ but rather very close to it, we need to consider a modification to the denominator in Eq. \eqref{eq:OEField} due to the velocity $u^Z$ of the particle as
$$R_\mu v_{\ret}^{\mu}\approx\frac{g}{2}\left(Z^2-Z_0^2(-\tau_0)\right)-\left(Z-Z_0(-\tau_0)\right)u^Z(-\tau_0)\ .$$
At the event $O$, the particle has zero velocity and acceleration, but the derivative of the proper acceleration 
(denoted by $\dot{\alpha}$)
is nonvanishing. Then, the relevant quantities in $R_\mu v_{\ret}^\mu$ 
can be Taylor-expanded around $\tau_0$ (where the event $O'$ occurs) and then expand the position as $Z=Z_0+\delta$ with $\delta/Z\ll1$ such that, at this limit, the electric field component $E^Z$ is given by
\be
E^Z=\frac{4e}{g^2}\frac{1}{\left[\delta^2+2\delta Z_0+\tau_0^3Z_0\dot{\alpha}/3-\delta\tau_0^2Z_0\dot{\alpha}\right]^2}\ .
\en
In order to find a relation between both perturbative parameters $(\tau_0,\delta)$ we can use the relation for $R_\mu v_{\ret}^{\mu}$ and evaluate it at $\tau=0$ and $\rho=0$, finally obtaining that $\delta\sim\tau_0$. Then, the force over the particle exerted by its own field is simply $F^Z=eE^Z$ and expanded in powers of $\delta$ reads
\be \label{eq:radreact}
F^Z\approx\frac{e^2}{\delta^2}-\frac{e^2g}{\delta}+\frac{3}{4}g^2e^2+\frac{2}{3}e^2\dot{\alpha}+\mathcal{O}\paren{\delta}\ .
\en
The first two terms in Eq. \eqref{eq:radreact} are clearly divergent and are further absorbed through mass renormalization. The third term is constant and can be neglected due to the singular character of 
some terms in Eq. \eqref{eq:radreact}. Hence, the only relevant contribution from Eq. \eqref{eq:radreact} is the  term 
linear in $\dot{\alpha}$. This term can be directly transformed to the inertial system by transforming the force into the inertial coordinates, resulting directly in 
\be
f^z=\frac{2e^2}{3}\left(\dot{a}^{z}-v^z\left(\dot{a}^{\mu}v_{\mu}\right)\right)\ ,
\en
which is the radiation reaction term originally derived by Dirac.

\bibliography{Rad_BI.bib}

\end{document}